\title{New Tools and Connections for Exponential-time Approximation }
\titlerunning{New Tools and Connections for Exponential-time Approximation} %optional, in case that the title is too long; the running title should fit into the top page column
\author[1]{Nikhil Bansal}
\author[2]{Parinya Chalermsook}
\author[3]{Bundit Laekhanukit}
\author[4]{Danupon Nanongkai}
\author[5]{Jesper Nederlof}
\affil[1]{Eindhoven University of Technology, The Netherlands.
	\texttt{n.bansal@tue.nl}.}
\affil[2]{Aalto University, Finland. \texttt{parinya.chalermsook@aalto.fi}.}
\affil[3]{Weizmann Institute of Science, Israel.
  \texttt{bundit.laekhanukit@weizmann.ac.il}.}
\affil[4]{KTH, Royal Institute of Technology, Sweden. \texttt{danupon@kth.se}}
\affil[5]{Eindhoven University of Technology, The Netherlands. \texttt{j.nederlof@tue.nl}. }
\authorrunning{N. Bansal et al.}
\subjclass{F.2.2 Nonnumerical Algorithms and Problems}
\keywords{Approximations Algorithms, PCP's, Exponential Time Algorithms}
\newcommand{\false}{\ensuremath{\mathbf{false}}\xspace}
\newcommand{\true}{\ensuremath{\mathbf{true}}\xspace}
\newcommand{\poly}{\ensuremath{\mathrm{poly}}\xspace}
\newcommand{\polyloglog}{\ensuremath{\mathrm{polyloglog}}\xspace}
\newcommand{\algorithmiccommentt}[1]{\colorbox{black!10}{#1}}
\newcommand{\eps}{\varepsilon}
\newcommand{\MIS}{{\sf Independent Set}\xspace}
\newcommand{\VC}{{\sf Vertex Cover}\xspace}
\newcommand{\COL}{{\sf Coloring}\xspace}
\newcommand{\SAT}{{\sf SAT}\xspace}
\newcommand{\ThreeSAT}{{\sf 3-CNF-SAT}\xspace}
\newcommand{\CSP}{{\sf CSP}\xspace}
\newcommand{\eset}{{\mathcal E}}
\def\ShowComment{True}
\newcounter{note}[section]
\def\danupon#1{\marginpar{$\leftarrow$\fbox{D}}\footnote{$\Rightarrow$~{\sf #1 --Danupon}}}
\def\parinya#1{\marginpar{$\leftarrow$\fbox{P}}\footnote{$\Rightarrow$~{\sf #1 --Parinya}}}
\def\danupon#1{}
\def\parinya#1{}
\begin{document}

\maketitle

\begin{abstract}
	In this paper, we develop new tools and connections for {\em exponential time approximation}. 
	In this setting, we are given a problem instance and a parameter $\alpha>1$, and the goal is to design an $\alpha$-approximation algorithm with the fastest possible running time. We show the following results: 
	\begin{enumerate}
		\item An $r$-approximation for  maximum independent set in $O^*(\exp(\tilde O(n/r \log^2 r+r\log^2r)))$ time, 
		\item An $r$-approximation for  chromatic number in $O^*(\exp(\tilde{O}(n/r \log r+r\log^2r)))$  time,
		\item A $(2-1/r)$-approximation for  minimum vertex cover  in $O^*(\exp(n/r^{\Omega(r)}))$  time, and
		\item A $(k-1/r)$-approximation for minimum $k$-hypergraph vertex cover  in $O^*(\exp(n/ (kr)^{\Omega(kr)}))$ time. 
	\end{enumerate} 
	(Throughout, $\tilde O$ and $O^*$ omit $\polyloglog(r)$ and factors polynomial in the input size, respectively.) The best known time bounds for all problems were   $O^*(2^{n/r})$ [Bourgeois et al. 2009, 2011 \& Cygan et al. 2008].
	For  maximum independent set and chromatic number, these bounds were complemented by  $\exp(n^{1-o(1)}/r^{1+o(1)})$ lower bounds (under the Exponential Time Hypothesis (ETH))~[Chalermsook et al., 2013 \& Laekhanukit, 2014 (Ph.D. Thesis)]. Our results show that the naturally-looking $O^*(2^{n/r})$ bounds are not tight for all these problems. 
	The key to these algorithmic results is a  {\em sparsification} procedure that reduces a problem to its bounded-degree variant, allowing the use of better approximation algorithms for bounded degree graphs. For obtaining the first two results, we introduce a new {\em randomized branching rule}.
	
	Finally, we show a connection between PCP parameters and  exponential-time approximation algorithms. This connection together with our independent set algorithm refute the possibility to overly reduce the size of Chan's PCP [Chan, 2016]. 
	It also implies that a (significant) improvement over our result will refute the gap-ETH conjecture [Dinur 2016 \& Manurangsi and Raghavendra, 2016].  	
\end{abstract}

\section{Introduction}
\label{sec:intro}

The \MIS, \VC, and \COL problems are central problems in combinatorial optimization and have been extensively studied.
Most of the classical results concern either approximation algorithms that run in polynomial time or exact algorithms that run in (sub)exponential-time. While these algorithms are useful in most scenarios, they lack flexibility: Sometimes, we wish for a better approximation ratio with worse running time (e.g. computationally powerful devices), or faster algorithms with less accuracy. 
In particular, the trade-off between the running time and approximation ratios are needed in these settings.

Algorithmic results on the trade-offs between approximation ratio have been studied already in the literature in several settings, most notably in the context of {\em Polynomial-time Approximation Schemes (PTAS)}.   
For instance, in planar graphs, Baker's celebrated approximation scheme for several NP-hard problems~\cite{DBLP:journals/jacm/Baker94} gives an $(1+\eps)$-approximation for e.g.~\MIS in time $O^*(\exp(O(1/\eps)))$ time. In graphs of small treewidth, Czumaj et al.~\cite{CHLN05} give an $O^*(\exp(tw/r))$ time algorithm that given a graph along with a tree decomposition of it of width at most $tw$, find an $r$-approximation for \MIS. For general graphs, approximation results for several problems have been studied in several works (see e.g.~\cite{BLP16,BP16,BEP11,CyganP10,CyganKW09,CKPW08}). 
A basic building block that lies behind many of these results is to partition the input instance in smaller parts in which the optimal (sub)solution can be computed quickly (or at least faster than fully exponential-time). For example, to obtain an $r$-approximation for \MIS one may arbitrarily partition the vertex set in $r$ blocks and restrict attention to independent sets that are subsets of these blocks to get a $O^*(\exp(n/r))$ time $r$-approximation algorithm.

While at first sight one might think that such a na\"ive algorithm should be easily improvable via more advanced techniques, it was shown in~\cite{CLN13,BLP16} that almost linear-size PCPs~\cite{Dinur07,MoshkovitzR10} imply that $r$-approximating \MIS \cite{CLN13} and \COL \cite{Laekhanukit2014-Thesis} requires at least $\exp(n^{1-o(1)}/r^{1+o(1)})$ time assuming the popular Exponential Time Hypothesis (ETH). In the setting of the more sophisticated Baker-style approximation schemes for planar graphs, Marx~\cite{Marx07} showed that no $(1+\eps)$-approximating algorithm for planar \MIS can run in time $O^*(\exp((1/\eps)^{1-\delta}))$ assuming ETH, which implies that the algorithm of Czumaj cannot be improved to run in time $O^*(\exp(tw/r^{1+\eps}))$. 

These lower bounds, despite being interesting, are far from tight and by no means answer the question whether the known approximation trade-offs can be improved significantly, and in fact in many settings we are far from understanding the full power of exponential time approximation. For example we cannot exclude algorithms that $2$-approximate $k$-\MIS\footnote{That is, given a graph and integer $k$ answer YES if it has an independent set of size at least $2k$ and NO if it has no independent set of size at least $k$.} in time $O^*(f(k))$ for some function $f$ (see e.g.~\cite{KS16}), nor do we know algorithms that run asymptotically faster than the fastest exact algorithm that runs in time $n^{0.792k}$ time~\cite{NP85}. 

In this paper we aim to advance this understanding and study the question of designing as fast as possible approximation algorithms for \MIS, \COL and \VC in general (hyper)graphs.

\subsection*{Our Results.}

For \MIS our result is the following. Here we use $\tilde{O}$ to omit log log factors in $r$.

\begin{theorem}\label{thm:main}
	There is a randomized algorithm that given an $n$-vertex graph $G$ and integer $r$ outputs an independent set that, with constant probability, has size at least $\alpha(G)/r$, where $\alpha(G)$ denotes the maximum independent set size of $G$. The algorithm runs in time $O^*(\exp(\tilde{O}(n/(r\log^2 r)+r \log^2 r)))$.
\end{theorem}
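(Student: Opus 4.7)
The plan is to design a two-phase randomized algorithm: (Phase~1) a new randomized branching rule that sparsifies $G$ to a bounded-degree subinstance while preserving an $\Omega(1/r)$ fraction of the optimum; and (Phase~2) a classical exact-MIS algorithm on the sparsified instance, whose running time is acceptable thanks to the degree bound. Fix a degree threshold $\Delta$ of the order of $r\log^2 r \cdot \polylog\log r$, chosen so that the standard degree-based branching for exact MIS on a graph of max degree $\Delta$ runs in time $\beta(\Delta)^n$ with $\log_2\beta(\Delta) = O(\log \Delta/\Delta) = \tilde O(1/(r\log^2 r))$. With this choice, Phase~2 is within the target budget on any sparsified instance.

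For Phase~1, while the current graph has a vertex $v$ of degree $\geq \Delta$, flip a biased coin and commit to a single branch: with a small probability $p$ (on the order of $1/r$) put $v$ into the output independent set and delete $N[v]$; with probability $1-p$ delete only $v$. Crucially, we do \emph{not} explore both branches in the usual way --- by committing to one random branch per step, a single execution of Phase~1 takes polynomial time. The parameters $p$ and $\Delta$ are tuned so that (i)~the expected reduction in graph size per step is $\Omega(p\Delta) = \tilde\Omega(\log^2 r)$, bounding the total number of branching steps (and hence random bits used) by $\tilde O(n/(r\log^2 r))$, and (ii)~the expected fraction of $\alpha(G)$ surviving in the remaining graph is $\Omega(1/r)$. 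In Phase~2 we then run exact MIS on the residual graph, merge its output with the vertices already committed to the IS during Phase~1, and return the result.

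For amplification, a single execution of Phases~1--2 succeeds (outputs an IS of size $\geq \alpha(G)/r$) with probability at least $\exp(-\tilde O(n/(r\log^2 r)))$, so we repeat the whole procedure that many times with fresh randomness and return the largest IS found; the resulting total running time matches the theorem statement. The additive $\exp(\tilde O(r\log^2 r))$ term in the running time is absorbed by a brute-force base case handling instances with $n \leq O(r\log^2 r)$, where all $2^n$ subsets can be enumerated directly.

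The main obstacle is the analysis of the randomized branching. The expected loss in OPT per step has two qualitatively different sources: a \emph{wrong delete} (excluding some $v\in\mathrm{OPT}$) costs just $1$, but a \emph{wrong include} (adding $v\notin\mathrm{OPT}$ to the IS and deleting $N[v]$) can wipe out $|N(v)\cap \mathrm{OPT}|$ vertices of OPT, which is potentially large since $v$ has high degree. The new branching rule must balance these two error modes against the shrinkage rate so that the compounded approximation factor is exactly $r$ --- a naive analysis would only give $r\cdot\polylog(r)$. I expect this is handled by a potential-function or martingale argument tracking the expected surviving OPT along the branching path, together with a careful choice of $p$ that makes the ``wrong include'' probability small enough to offset the potentially large cost per occurrence.
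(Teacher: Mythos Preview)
Your plan has the right two-phase shape, but both phases differ from the paper in ways that break the argument.

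\textbf{Phase 2 is the main gap.} You assert that exact \MIS on graphs of maximum degree $\Delta$ runs in time $\beta(\Delta)^n$ with $\log_2\beta(\Delta)=O((\log\Delta)/\Delta)$. No such result is known. The recurrence $T(n)\le T(n-1)+T(n-\Delta-1)$ is only available while some vertex of degree at least $\Delta$ is present; once all degrees drop below $\Delta$ you still face exact \MIS on a bounded-degree graph, which under ETH already requires $2^{\Omega(n)}$ time at degree~$3$. And even granting your claim, the arithmetic is off by a full $\log r$: with $\Delta=\tilde\Theta(r\log^2 r)$ one has $(\log\Delta)/\Delta=\tilde\Theta(1/(r\log r))$, not $\tilde O(1/(r\log^2 r))$ (recall that $\tilde O$ here suppresses only $\polyloglog(r)$ factors). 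The paper does \emph{not} solve the leaf instances exactly. It applies the $\tilde O(d/\log^2 d)$-\emph{approximation} of Bansal et al.\ for degree-$d$ graphs; this approximation ratio is what buys the extra $\log^2 r$ in the denominator, and its $\exp(O(d))$ running time is the actual source of the additive $r\log^2 r$ term in the exponent---not a brute-force base case on small $n$.

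\textbf{Phase 1 is also different, and the ``wrong include'' hazard you flag is real and unresolved.} Committing to one branch per step, the straightforward induction on $\mathbb{E}[\text{output}]$ fails precisely when $v\notin\mathrm{OPT}$: the include branch can lose up to $d_G(v)\ge\Delta\gg r$ optimal vertices, producing a deficit $p\,(1-d_G(v)/r)<0$. The paper avoids this entirely by a different rule: with probability $1/r$ it explores \emph{both} branches and returns the larger set, and with probability $1-1/r$ it explores only the exclude branch. Since the exclude branch is \emph{always} explored, a bad include never hurts the output; the coin only decides whether extra time is spent. The expected-output induction $\mathbb{E}[|\mathtt{IS}(G,r)|]\ge\alpha(G)/r$ is then immediate, and the expected number of leaves obeys $R(n)\le R(n-1)+\tfrac{1}{r}\,R(n-d)$, which solves to $\exp\bigl(O(n\log(d/r)/d)\bigr)$. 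With $d=\tilde\Theta(r\log^2 r)$ the numerator $\log(d/r)=\tilde O(\log\log r)$---rather than $\log d\approx\log r$---is exactly what delivers the $\tilde O(n/(r\log^2 r))$ exponent. Finally, because the expected output is at least $3\alpha(G)/r$ (applying the lemma with parameter $r/3$) and the output is capped at $\alpha(G)$, a reverse-Markov argument gives success probability $\Omega(1/r)$, so only $O(r)$ repetitions are needed rather than the sub-exponentially many your amplification would require.
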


To prove this result we introduce a new \emph{randomized branching rule} that we will now introduce and put in context towards previous results. This follows a \emph{sparsification technique} that reduces the maximum degree to a given number. This technique was already studied before in the setting of exponential time approximation algorithms \MIS by Cygan et al. (see~\cite[paragraph `Search Tree Techniques']{CKPW08}) and Bourgeois et al. (see~\cite[Section 2.1]{BEP11}), but the authors did not obtain running times sub-exponential in $n/r$. Specifically, the sparsification technique is to branch (e.g. select a vertex and try to both include $v$ in an independent set or discard and recurse for both possibilities) on vertices of sufficiently high degree. The key property is that if we decide to include a vertex and the independent set, we may discard all neighbors of $v$. If we generate instances by keep branching on vertices of degree at least $d$ until the maximum degree is smaller than $d$, then at most $\binom{n}{n/d} \lessapprox \exp(n\log(d)/d)$ instances are created. In each such instance, the maximum independent set can be easily $d$-approximated by a greedy argument. Cygan et al.~\cite{CKPW08} note that this gives worse than $O^*(2^{n/r})$ running times. 

Our algorithm works along this line but incorporates two (simple) ideas. 
Our first observation is that instead of solving each leaf instance by greedy $d$-approximation algorithm, one can use a recent $\tilde O(\frac{d}{\log^2 d})$ approximation algorithm by Bansal et al.~\cite{BGG15} for \MIS on bounded degree graphs. 
If we choose $d \approx r \log^2 r$, this immediately gives an improvement, an $r$-approximation in time essentially ${\sf exp}(\frac{n}{r \log r})$. 
To improve this further we use present an additional (more innovative) idea introducing randomization. This idea relies on the fact that in the sparsification step we have (unexploited) slack as we aim for an approximation.\footnote{This observation was already made by Bourgeois et al.~\cite{BEP11}, but we exploit it in a new way.} Specifically, whenever we branch, we only consider the `include' branch with probability $1/r$. This will lower the expected number of produced leaf instances in the sparsification step to $2^{n/d} \approx {\sf exp}(\frac{n}{r \log^2 r})$ and preserves the approximation factor with good probability.

Via fairly standard methods (see e.g.~\cite{BHK09}) we show this also gives a faster algorithm for coloring in the following sense:

\begin{theorem}\label{thm:col}
	There is a randomized algorithm that, given an $n$-vertex graph $G$ and an integer $r>0$, outputs with constant probability a proper coloring of $G$ using at most $r\cdot\chi(G)$ colors. The algorithm runs in time $O^*(\exp(\tilde{O}(n/(r\log r)+r \log^2 r)))$.
\end{theorem}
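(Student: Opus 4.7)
The plan is to reduce $r$-approximate chromatic number to the MIS approximation of Theorem~\ref{thm:main} via greedy peeling, and to finish with the exact $O^*(2^n)$-time chromatic number algorithm of Björklund, Husfeldt and Koivisto~\cite{BHK09} on whatever vertices remain. Setting an internal approximation parameter $r' = \Theta(r/\log r)$, I would repeatedly apply Theorem~\ref{thm:main} with parameter $r'$ to the current residual graph $G_i$: since $\chi(G_i) \leq \chi(G)$, we have $\alpha(G_i) \geq n_i/\chi(G)$, so the returned independent set has size at least $n_i/(r' \chi(G))$. I assign it a fresh color and delete it, obtaining $n_{i+1} \leq n_i(1 - 1/(r'\chi(G)))$.

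Stopping this peeling once $n_i \leq n/(r\log r)$, the number of rounds needed is $T = O(r' \chi(G) \log r)$ by the standard $(1-x)^T \leq e^{-xT}$ bound. I then invoke~\cite{BHK09} to color the residual graph exactly in time $O^*(2^{n_T})$, using at most $\chi(G)$ additional colors. The total color count is $T + \chi(G) = O((r/\log r)\chi(G)\log r + \chi(G)) = O(r \chi(G))$, matching the claimed ratio after absorbing the hidden constant by taking $r'$ smaller by a constant factor.

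For the running time, each MIS call runs in $\exp(\tilde O(n/(r' \log^2 r') + r' \log^2 r'))$ time by Theorem~\ref{thm:main}. Substituting $r' = \Theta(r/\log r)$ (so $\log r' = \Theta(\log r)$) bounds the exponent by $\tilde O(n/(r \log r) + r \log r)$, within the stated target. The terminal BHK09 step contributes $O^*(2^{n_T}) = O^*(\exp(n/(r\log r)))$, also absorbed. Since Theorem~\ref{thm:main} succeeds only with constant probability per call, I would amplify each of the $T = \poly(n)$ calls by $O(\log n)$ independent repetitions and union-bound over iterations, producing constant overall success probability with only polynomial overhead.

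The main obstacle is the joint parameter tuning. A naive iteration with $r' = r$ loses a factor $\log n$ in the color count, giving only an $O(r\chi \log n)$-coloring rather than $r\chi$; conversely, taking $r'$ too small inflates the per-call MIS running time beyond the target exponent. The choice $r' = \Theta(r/\log r)$, coupled with stopping the peeling precisely when $n_T \approx n/(r\log r)$ so that the subsequent $2^{n_T}$ cost of BHK09 matches the per-call MIS cost, is what makes the approximation ratio, the peeling running time, and the terminal exact-coloring running time all land simultaneously within $\exp(\tilde O(n/(r\log r) + r\log^2 r))$.
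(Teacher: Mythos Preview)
Your proposal is correct and matches the paper's proof essentially step for step: the paper likewise peels independent sets using Theorem~\ref{thm:main} with parameter $r/\ln(r\log r) = \Theta(r/\log r)$, stops once $|V| \leq n/(r\log r)$, and finishes with the exact algorithm of~\cite{BHK09}, obtaining the same color count and the same running-time bound. Your explicit amplification of the per-call success probability via $O(\log n)$ repetitions and a union bound is a detail the paper glosses over, so if anything your write-up is slightly more careful on that point.
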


As a final indication that sparsification is a very powerful tool to obtain fast exponential time approximation algorithms, we show that a combination of a result of Halperin~\cite{Halperin02} and the sparsification Lemma~\cite{IPZ01} gives the following result for the Vertex Cover problem in hypergraphs with edges of size at most $k$ (or Set Cover problem with frequency at most $k$). 

\begin{theorem}\label{apx:vchyp}
	For every $k$, there is an $r_0:=r(k)$ such that for every $r \geq r_0$ there is an $O^*(\exp(\frac{n}{(kr)^{\Omega(kr)}}))$ time $(k-\tfrac{1}{r})$-approximation algorithm for the Vertex Cover problem in hypergraphs with edges of size at most $k$.
\end{theorem}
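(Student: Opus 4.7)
The plan is to combine two ingredients in a sparsify-then-approximate pipeline: the Impagliazzo--Paturi--Zane (IPZ) sparsification lemma, applied to the hypergraph by identifying each edge $e$ with the monotone clause $\bigvee_{v\in e} x_v$ (so that vertex covers of $H$ correspond to satisfying assignments of the associated monotone $k$-CNF formula), and Halperin's polynomial-time approximation algorithm for vertex cover in bounded-degree $k$-hypergraphs.

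First, I would invoke IPZ sparsification with a parameter $\epsilon=\epsilon(k,r)$ to obtain, in time $O^*(2^{\epsilon n})$, a collection of at most $2^{\epsilon n}$ sub-instances $H_1,\dots,H_t$ whose minimum vertex covers collectively determine that of $H$, and each of which has maximum vertex-degree bounded by some $\Delta_0=\Delta_0(k,\epsilon)$ governed by the sparsification lemma. Second, I would run Halperin's algorithm on each $H_i$; since Halperin achieves an approximation ratio of $k-\Omega(k\log\log\Delta_0/\log\Delta_0)$ on max-degree-$\Delta_0$ instances, choosing $\Delta_0\leq (kr)^{O(kr)}$ (so that $\log\Delta_0/\log\log\Delta_0=O(kr)$) makes the per-instance ratio at most $k-1/r$. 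Returning the best cover across all $H_i$ then gives a $(k-1/r)$-approximation for $H$.

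The final step is parameter tuning: solving the dependence of $\Delta_0$ on $\epsilon$ in the IPZ bound for the required $\Delta_0\leq (kr)^{O(kr)}$ yields $\epsilon = 1/(kr)^{\Omega(kr)}$, giving the claimed running time $O^*(\exp(n/(kr)^{\Omega(kr)}))$. The threshold $r_0(k)$ is chosen large enough that the $o(1)$ terms hidden in Halperin's ratio are dominated by $1/r$ and that the sparsification parameter is meaningful for the given $k$.

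The main obstacle is the parameter balance between the two regimes: decreasing $\epsilon$ shortens the $2^{\epsilon n}$ sparsification cost but inflates $\Delta_0$, which in turn weakens Halperin's per-instance approximation ratio, so the balance must be done carefully to hit exactly $(k-1/r)$. A secondary technical point is to verify that the IPZ sparsification lemma, originally stated for $k$-CNF-SAT in the decision setting, transfers cleanly to minimum-cost vertex cover; this follows because the lemma preserves the set of satisfying assignments of the associated monotone CNF, so the minimum-cost satisfying assignment of $H$ equals the minimum over $i$ of the minimum-cost satisfying assignment of $H_i$.
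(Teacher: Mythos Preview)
Your proposal is correct and follows essentially the same approach as the paper: apply the Impagliazzo--Paturi--Zane sparsification lemma (stated for hypergraph vertex cover) with parameter $\eps$ to produce $2^{\eps n}$ bounded-degree subinstances, run Halperin's $k-(1-o(1))\tfrac{k(k-1)\ln\ln\Delta}{\ln\Delta}$ algorithm on each, and tune $\eps$ so that the resulting degree bound $\Delta\le (k/\eps)^{O(k)}$ is roughly $(kr)^{\Theta(kr)}$, yielding ratio $k-\Omega(1/r)$ and running time $O^*(\exp(n/(kr)^{\Omega(kr)}))$. Your additional remark that IPZ transfers to the optimization version because it preserves the satisfying-assignment set of the associated monotone $k$-CNF is exactly the justification behind the paper's restated sparsification lemma (its property (1)), so nothing is missing.
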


Note that for $k=2$ (e.g. vertex cover in graphs), this gives an $O^*(\exp(\frac{n}{r^{\Omega(r)}}))$ running time, which gives an exponential improvement (in the denominator of the exponent)  upon the $(2-1/r)$ approximation by Bonnet et al.~\cite{BEP11} that runs in time $O^*(2^{n/r})$. It was recently brought to our attention that Williams and Yu~\cite{WY} independently have unpublished results for (hypergraph) vertex cover and independent set using sparsification techniques similar to ours.

\subparagraph*{Connections to PCP parameters}
The question of approximating the maximum independent set problem in sub-exponential time has close connections to the trade-off  between three important parameters of PCPs: {\em size}, {\em gap} and {\em free-bit}.
We discuss the implications of our algorithmic results in terms of these PCP parameters. 

Roughly speaking, the gap parameter is the ratio between completeness and soundness, while the {\em freeness} parameter is the number of distinct proofs that would cause the verifier to accept; the {\em free-bit} is simply a logarithm of freeness.
For convenience, we will continue our discussions in terms of freeness, instead of freebit.

\begin{itemize}
    \item {\bf Freebit v.s. gap:} The dependency between freeness and gap has played important role in hardness of approximation.
Most notably,  the existence of PCPs with freeness $g^{o(1)}$ where $g$ is a gap parameter is ``equivalent'' to $n^{1-o(1)}$ hardness of approximating maximum independent set~\cite{Hastad96,BellareGS98}; this result is a building block for proving other hardness of approximation for many other combinatorial problems, e.g., coloring~\cite{FeigeK98}, disjoint paths, induced matching, cycle packing, and pricing.
So it is fair to say that this PCP parameter trade-off captures the approximability of many natural combinatorial problems.

Better parameter trade-off implies stronger hardness results.
The existence of a PCP with arbitrarily large gap and freeness $1$ (lowest possible) is in fact equivalent to $(2-\epsilon)$ inapproximability for \VC.
The best known trade-off is due to Chan~\cite{Chan16}: For any $g >0$, there is a polynomial-sized PCP with gap $g$ and freeness $O(\log g)$, yielding the best known NP-hardness of approximating maximum independent set in sparse graphs, i.e. $\Omega(d/ \log^4 d)$ NP-hardness of approximating maximum independent set in degree-$d$ graphs.~\footnote{Roughly speaking, the existence of a PCP with freeness $F(g)$ (where $g$ is a gap) implies $\Omega(\frac{d}{F(d) \log^3 d})$ hardness of approximating independent set in degree-$d$ graphs.}

\item {\bf Size, freebit, and gap:} When a polynomial-time approximation algorithm is the main concern, polynomial size PCPs are the only thing that matter.
But when it comes to exponential time approximability, another important parameter, {\em size} of the PCPs, has come into play.
The trade-off between size, freebit, and gap tightly captures the (sub-)exponential time approximability of many combinatorial problems.
For instance, for any $g >0$, Moshkovitz and Raz~\cite{MoshkovitzR10} constructs a PCP of size $n^{1+o(1)}$ and freeness $2^{O(\sqrt{\log g})}$; this implies that $r$-approximating \MIS requires time $2^{n^{1-o(1)}/r^{1+o(1)}}$~\cite{CLN13}.

\end{itemize}

Our exponential-time approximation result for \MIS implies the following tradeoff results.

\begin{corollary}
Unless ETH breaks, a freebit PCP with gap $g$, freeness $F$ and size $S$ must satisfy $F \cdot S = \Omega(n \log^2 g)$.
\end{corollary}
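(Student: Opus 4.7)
The plan is to combine a standard FGLSS-style reduction with Theorem~\ref{thm:main} and the ETH lower bound for $3$-SAT. Starting from a hypothetical freebit PCP with gap $g$, freeness $F$, and size $S$ for $3$-SAT on $n$ variables, the FGLSS construction yields a graph $H$ with one vertex per (random string, locally accepting proof configuration) pair and edges enforcing consistency between configurations. Standard bookkeeping gives $|V(H)| = O(F \cdot S)$, and by the PCP's completeness and soundness a $g$-approximation of $\alpha(H)$ decides the original $3$-SAT instance.

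Next I would apply Theorem~\ref{thm:main} to $H$ with approximation parameter $r = g$, producing a randomized $g$-approximation in time $O^*(\exp(\tilde{O}(|V(H)|/(g \log^2 g) + g \log^2 g)))$. Substituting $|V(H)| = O(FS)$ and composing with the FGLSS reduction yields a randomized $3$-SAT algorithm of running time $\exp(\tilde{O}(FS/(g \log^2 g) + g \log^2 g))$. Invoking ETH, which rules out $2^{o(n)}$-time (randomized) algorithms for $3$-SAT, forces the exponent to be $\Omega(n)$: that is, $FS/(g \log^2 g) + g \log^2 g = \Omega(n)$. In the meaningful regime $g \log^2 g = o(n)$ the first summand dominates and one obtains $FS = \Omega(n g \log^2 g) \geq \Omega(n \log^2 g)$ (since $g \geq 1$); in the opposite regime the stated bound is immediate.

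The main obstacle is essentially bookkeeping: one must fix the conventions so that the FGLSS graph's vertex count reads $\Theta(F \cdot S)$ in terms of the PCP's freeness and size, treating freeness as the number of accepting configurations (consistent with the definition given in the introduction, as opposed to free-bits). Once this identification is in place the derivation is essentially a one-line substitution, and the conceptual content of the corollary is that the $\log^2 r$ improvement Theorem~\ref{thm:main} achieves in the denominator of the exponent relative to the trivial $O^*(2^{n/r})$ algorithm translates directly into the $\log^2 g$ improvement on the right-hand side of the corollary.
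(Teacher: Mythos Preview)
Your approach is essentially the paper's: combine an FGLSS-style reduction with Theorem~\ref{thm:main} and ETH. The paper packages the reduction half as Theorem~\ref{thm:pcp and sub-expo} and then derives the formal version of this corollary (Corollary~\ref{cor:tradeoff}) from it together with Theorem~\ref{thm:main}, but the content is the same chain of implications you outline.

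There is one substantive bookkeeping point you gloss over. In the paper's \fgpcp definition, ``size'' $S$ is the \emph{proof length}, not the number of random strings; the verifier is allowed $2^{o(n)}$ time and hence potentially that much randomness, so the naive FGLSS graph (one vertex per random string and accepting local view) need not have $O(FS)$ vertices. The paper handles this with a randomized subsampling step (Lemma~\ref{lem: pcp to csp}): sample $M=\Theta(Sg/\delta)$ random strings to obtain a CSP with $O(Sg)$ clauses while preserving the gap up to constants; the factor $g$ is forced by the Chernoff/union-bound over all $2^{S}$ proofs in the soundness analysis. FGLSS then yields $|V(H)|=O(SFg)$, not $O(SF)$. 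Plugging this corrected size into your computation gives $SFg/(g\log^2 g)=\tilde{\Omega}(n)$, i.e.\ $SF=\tilde{\Omega}(n\log^2 g)$, which is exactly the stated bound; your stronger intermediate claim $SF=\Omega(ng\log^2 g)$ is an artifact of the missing factor $g$. So the ``fixing of conventions'' you flag is not purely cosmetic---it requires the subsampling lemma---but once inserted your argument goes through and matches the paper's.
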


In particular, this  implies that (i) Chan's PCP cannot be made smaller size than $o(n \log g)$, unless ETH breaks, and (ii) in light of the equivalence between gap-amplifying freebit PCPs with freeness $1$ and $(2-\epsilon)$ approximation for \VC, our result shows that such a PCP must have size at least $\Omega(n \log^2 g)$.
We remark that no such trade-off results are known for polynomial-sized PCPs. To our knowledge, this is the first result of its kind.

\subparagraph*{Further related results}

The best known results for \MIS in the polynomial-time regime are an $O(\frac{n (\log \log n)^2}{\log^3 n})$-approximation~\cite{Feige04}, and the hardness of $n/{\sf exp}(O(\log^{3/4+ o(1)} n))$ (which also holds for \COL)~\cite{KhotP06}. 
For \VC, the best known hardness of approximation is $(\sqrt{2} - o(1))$ NP-hardness~\cite{Khot16} and $(2-\epsilon)$ hardness assuming the unique games conjecture~\cite{KhotR08}. 
All three problems (\MIS, \COL, and \VC) do not admit exact algorithms that run in time $2^{o(n)}$, unless ETH fails. Besides the aforementioned works~\cite{BEP11,CKPW08} sparsification techniques for exponential time approximation were studied by Bonnet and Paschos in~\cite{BP16}, but mainly hardness results were obtained.
 
\section{Preliminaries}
\label{sec:prelim}

We first formally define the three problems that we consider in this paper. \MIS: Given a graph $G=(V,E)$, we say that $J \subseteq V$ is an independent set if there is no edge with both endpoints in $J$. The goal of \MIS is to output an independent set $J$ of maximum cardinality. Denote by $\alpha(G)$, the cardinality of the maximum independent set. \VC: Given a graph $G=(V,E)$, we say that $J \subseteq V$ is a vertex cover of $G$ if every edge is incident to at least one vertex in $J$. The goal of \VC is to output a vertex cover of minimum size. A generalization of vertex cover, called $k$-Hypergraph Vertex Cover $k$-\VC, is defined as follows. Given a hypergraph $G=(V,\eset)$ where each hyperedge $h \in \eset$ has cardinality at most $k$, the goal is to find a collection of vertices $J \subseteq V$ such that each hyperedge is incident to at least one vertex in $J$, while minimizing $|J|$. The \emph{degree $\Delta(H)$} of hypergraph $H$ is the maximum frequency of an element. \COL: Given a graph $G=(V,E)$, a proper $k$-coloring of $G$ is a function $f: V \rightarrow [k]$ such that $f(u) \neq f(v)$ for all $uv \in E$. The goal of \COL is to compute a minimum integer $k>0$ such that $G$ admits a (proper) $k$-coloring; this number is referred to as the {\em chromatic number}, denote $\chi(G)$.

For a graph $G=(V,E)$, $N_G(v)$ denotes the set of neighbors of $v$ and $d_G(v)$ denotes $|N_G(v)|$. If $X \subseteq V$ we let $G[X]$ denote the graph $(X, E \cap (X \times X))$ i.e. the subgraph of $G$ induced by $X$
We use $\exp(x)$ to denote $2^x$ in order to avoid superscripts.
We use the $O^*(\cdot)$-notation to suppress factors polynomial in the input size. We use $\tilde{O}$ and $\tilde{\Omega}$ to suppress factors polyloglog in $r$ in respectively upper and lower bounds and write $\tilde{\Theta}$ for all functions that are in both $\tilde{O}$ and $\tilde{\Omega}$.

\section{Faster Approximation via Randomized Branching and Sparsification}
\label{sec:approx-via-sparsifier}

\subsection{Maximum Independent Set}
\label{sec:independent-set}

In this section we prove Theorem~\ref{thm:main}. Below is our key lemma.

\begin{lemma}\label{lem:red}
	Suppose there is an approximation algorithm $\mathtt{dIS}(G,r)$ that runs in time $T(n,r)$ and outputs an Independent Set of $G$ of size $\alpha(G)/r$ if $G$ has maximum degree $d(r)$, (where $d(r) \geq 2r$). Then there is an algorithm $\mathtt{IS}(G,r)$ running in expected time $O^*\left(\exp\left(\tfrac{n}{d(r)}\log(4d(r)/r)\right)T(n,r)\right)$ that outputs an independent set of expected size $\alpha(G)/r$.
\end{lemma}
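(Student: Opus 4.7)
The plan is to instantiate the randomized branching rule sketched in the introduction and analyze it via (i) a recursion for the expected number of leaves of the random recursion tree and (ii) a martingale argument for the approximation ratio. Concretely, $\mathtt{IS}(G,r)$ is defined as follows: if every vertex of $G$ has degree less than $d:=d(r)$, return $\mathtt{dIS}(G,r)$; otherwise pick any vertex $v$ with $\deg(v)\ge d$, recursively compute $I_1=\mathtt{IS}(G-v,r)$, flip an independent coin of bias $1/r$, and on heads also compute $I_2=\{v\}\cup\mathtt{IS}(G[V\setminus N[v]],r)$ (otherwise set $I_2=\emptyset$); return the larger of $I_1,I_2$. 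A trivial induction on $|V|$ confirms that the output is always an independent set of $G$.

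For the running time, let $L(n)$ be the expected number of leaves of the random recursion tree on an instance with $n$ vertices. Branching gives $L(n)\le L(n-1)+\tfrac{1}{r}L(n-d-1)$, with $L(n)=1$ at the base. The ansatz $L(n)\le\lambda^n$ reduces the recurrence to $\lambda^d(\lambda-1)\ge 1/r$. Taking $\lambda=(4d/r)^{1/d}$ gives $\lambda^d=4d/r$ and $\lambda-1\ge(\ln 2)\log(4d/r)/d$ via $e^x-1\ge x$, so $\lambda^d(\lambda-1)\ge 4(\ln 2)\log(4d/r)/r\ge 1/r$ since $d\ge 2r$ forces $\log(4d/r)\ge 3$. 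Hence $L(n)\le\lambda^n=\exp\!\bigl((n/d)\log(4d/r)\bigr)$, and multiplying by $T(n,r)$ plus the polynomial overhead at each leaf yields the claimed expected running time.

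For the approximation guarantee, fix an optimum independent set $I^*$ of $G$. Conditioned on the random recursion tree produced by the algorithm, define a \emph{witness leaf} $\ell^*$ by the walk from the root that at each branching on a vertex $v$ descends into the include-child iff $v\in I^*$ \emph{and} the include-child is present, and otherwise descends into the always-present exclude-child. Let $S_t$ be the set of vertices added along this walk through step $t$, $V_t$ the residual vertex set, and set $\phi_t=r|S_t|+|I^*\cap V_t|$. A case analysis gives $\mathbb E[\phi_{t+1}-\phi_t\mid\mathcal F_t]=0$: if $v\in I^*$, then $\Delta\phi=r-1$ with probability $1/r$ (coin heads, include taken) and $\Delta\phi=-1$ otherwise, averaging to $0$; if $v\notin I^*$ we always exclude and $\Delta\phi=0$. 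Thus $\phi_t$ is a martingale and $\mathbb E[\phi_T]=\phi_0=|I^*|$. At $\ell^*$ the set $I^*\cap V_T$ is still independent in the residual graph, so $\mathtt{dIS}$ returns at least $|I^*\cap V_T|/r$ vertices, which together with $S_T$ gives total size at least $|S_T|+|I^*\cap V_T|/r=\phi_T/r$. Since the algorithm returns the best set over all leaves, its expected output size is at least $\mathbb E[\phi_T]/r=|I^*|/r=\alpha(G)/r$.

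The main obstacle is the approximation analysis. A naive approach that fixes the deterministic ``good'' path (include exactly when $v\in I^*$) and asks it to survive in the random tree fails: the survival probability is $(1/r)^{k}$ with $k$ possibly as large as $n/(d+1)$, which is useless. The scaled potential $\phi=r|S|+|I^*\cap V|$ is tuned so that the $+r$ gained when including an $I^*$-vertex exactly cancels the $-1$ lost from excluding it with the remaining probability $1-1/r$, producing an \emph{unconditional} martingale; the identity $|S|+|I^*\cap V|/r=\phi/r$ then converts this conservation law directly into the expected approximation bound, sidestepping the otherwise fatal union-bound blowup.
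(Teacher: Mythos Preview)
Your proof is correct and follows the same algorithm and running-time recurrence as the paper. For the approximation guarantee the paper gives a direct induction on $n$ (if $v\notin I^*$ use $\alpha(G)=\alpha(G-v)$; if $v\in I^*$ compute $(1-\tfrac1r)\tfrac{\alpha(G)-1}{r}+\tfrac1r\bigl(\tfrac{\alpha(G)-1}{r}+1\bigr)=\tfrac{\alpha(G)}{r}$), which your martingale with potential $\phi=r|S|+|I^*\cap V|$ is precisely an unrolling of.
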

\begin{proof}
	Consider the following algorithm.
	\begin{figure}[H]
		\begin{framed}
			\begin{algorithmic}[1]
				\REQUIRE $\mathtt{IS}(G=(V,E),r)$
				\IF{$\exists v \in V: d_G(v) \geq d(r)$} \label{lin:if}
				\STATE Draw a random Boolean variable $b$ such that $\Pr[b=\true]=1/r$.
				\IF{$b=\true$}
				\STATE \algorithmicreturn\ the largest of $IS(G[V\setminus v])$ and $IS(G[V \setminus N(v)]) \cup \{v\}$. \label{lin:branch}
				\ELSE
				\STATE \algorithmicreturn\ $IS(G[V\setminus v])$.
				\ENDIF
				\ELSE
				\STATE \algorithmicreturn\ $dIS(G)$.\label{lin:degapx}
				\ENDIF
			\end{algorithmic}
		\end{framed}
		\caption{Approximation algorithm for Independent Set using an approximation algorithm $dIS$ that works in bounded degree graphs.}
		\label{alg:IS}
	\end{figure}

	For convenience, let us fix $r$ and $d:=d(r)$.
	We start by analyzing the expected running time of this algorithm. Per recursive call the algorithm clearly uses $O^*(T(n,r))$ time. It remains to bound the number of recursive calls $R(n)$ made by $\mathtt{IS}(G,r)$ when $G$ has $n$ vertices. We will bound $R(n) \leq 2^{\lambda n}$ for $\lambda= \log(4d/r)/d$ by induction on $n$. Note that here $\lambda$ is chosen such that
	\begin{equation}\label{eq:lambda}
	\exp(-\lambda\cdot d) = r/(4d) \leq \frac{r\log(4d/r)}{2d},
	\end{equation}
	where we use $d/r \geq 2$ for the inequality. For the base case of the induction, note that if the condition at Line~\ref{lin:if} does not hold, the algorithm does not use any recursive calls and the statement is trivial as $\lambda$ is clearly positive. For the inductive step, we see that
	\begin{align*}
	R(n) &\leq R(n-1) + \Pr[b=\true]\cdot R(n-d)&\\
	&= R(n-1) + R(n-d)/r&\\
	&= \exp(\lambda(n-1))+\exp(\lambda(n-d))/r&\\
	&= \exp(\lambda n)\left( \exp(-\lambda)+\exp(-\lambda d)/r\right)& \hfill\algorithmiccommentt{Using $\exp(-x) \leq 1-x/2$ for $x \in [0,1]$}\\
	&\leq \exp(\lambda n)\left( 1-\lambda/2+\exp(-\lambda d)/r\right)& \hfill\algorithmiccommentt{Using $\exp(-\lambda\cdot d(r)) \leq \lambda r/2$ from~\eqref{eq:lambda}} \\
	&\leq \exp(\lambda n).
	\end{align*}
	We continue by analyzing the output of the algorithm. It clearly returns a valid independent set as all neighbors of $v$ are discarded when $v$ is included in Line~\ref{lin:branch} and an independent set is returned at Line~\ref{lin:degapx}. It remains to show $\mathbb{E}[|\mathtt{IS}(G,r)|] \geq \alpha(G)/r$ which we do by induction on $n$. In the base case in which no recursive call is made, note that on Line~\ref{lin:degapx} we indeed obtain an $r$-approximation as $G$ has maximum degree $d(r)$. For the inductive case, let $X$ be a maximum independent set of $G$ and let $v$ be the vertex as picked on Line~\ref{lin:if}. We distinguish two cases based on whether $v \in X$. If $v \notin X$, then $\alpha(G)=\alpha(G[V\setminus v])$ and the inductive step follows as $\mathbb{E}[|\mathtt{IS}(G[V \setminus v],r)|] \geq \alpha(G)/r$ by the induction hypothesis. Otherwise, if $v \in X$, then $\mathbb{E}[|\mathtt{IS}(G,r)|]$ is at least
	\begin{align*}
	 &\Pr[b=\false]\cdot\mathbb{E}[|\mathtt{IS}(G[V \setminus \{v\}],r)|]+\Pr[b=\true]\cdot\mathbb{E}[|\mathtt{IS}(G[N \setminus N(v)],r)|+1]\\
	\geq\ &\left(1-\tfrac{1}{r}\right)\frac{\alpha(G)-1}{r} + \tfrac{1}{r}\left(\frac{\alpha(G)-1}{r}+1\right) \\
	=\ &\frac{\alpha(G)-1}{r}+\tfrac{1}{r} = \alpha(G)/r,
	\end{align*}
	as required. Here the first inequality uses the induction hypothesis twice.
\end{proof}

We will invoke the above lemma by using the algorithm $dIS(G)$ by Bansal et al.~\cite{BGG15} implied by the following theorem:

\begin{theorem}[\cite{BGG15}, Theorem 1.3]\label{thm:bansalind}
	There is an $\tilde{O}(d/\log^2 d)$ approximation algorithm $dIS(G)$ for Independent Set on graphs of maximum degree $d$ running in time $O^*(\exp(O(d)))$.
\end{theorem}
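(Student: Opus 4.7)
The plan is to combine a semidefinite programming relaxation of the independent set polytope with a tailored rounding procedure, following the broad template of Karger--Motwani--Sudan and Halperin. First I would compute, in polynomial time, an optimum of the Lov\'asz theta SDP of $G$, equivalently a vector colouring that assigns to each vertex a unit vector $u_v$ with $\langle u_v,u_w\rangle\leq -1/(k-1)$ for every edge $vw$. For a graph of maximum degree $d$ the Caro--Wei bound already gives $\vartheta(G)\geq \alpha(G)\geq n/(d+1)$, and a more refined analysis exploiting sparsity of local neighbourhoods (via Shearer-type or Ajtai--Koml\'os--Szemer\'edi arguments) pushes the theta value up to roughly $\tilde\Omega(n\log^2 d/d)$, which is the structural certificate needed to match the target approximation ratio.

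Second, I would extract an independent set from the vector colouring by a random-projection rounding: sample $t=\Theta(\log d)$ independent Gaussian vectors $g_1,\dots,g_t$ and place $v$ into a candidate set $S$ exactly when every inner product $\langle u_v,g_i\rangle$ exceeds a suitable threshold $\tau=\tau(d)$. Standard Gaussian tail estimates show that the expected size of $S$ is $\tilde\Omega(n\log^2 d/d)$, while the SDP angle constraint on each edge $vw$ makes it exponentially unlikely that both endpoints survive all $t$ projections, so the expected number of remaining ``bad'' edges is a lower-order term. A clean-up step deleting one endpoint per surviving bad edge then loses only a constant factor; together with the trivial bound $\alpha(G)\leq n$ this yields an $\tilde O(d/\log^2 d)$-approximate independent set. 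All steps run in time polynomial in $n$, which is well within the claimed $O^*(\exp(O(d)))$ budget (the $\exp(O(d))$ slack could alternatively absorb a brute-force solve on $O(d)$-sized local substructures if the SDP rounding is replaced by a combinatorial enumeration).

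The main obstacle is saving the second logarithmic factor on top of Halperin's earlier $O(d/\log d)$ approximation. Achieving it requires simultaneously (i) a structural lower bound of order $\tilde\Omega(n\log^2 d/d)$ on the SDP value, using the bounded-degree hypothesis to control the geometry of the vectors $u_v$, and (ii) a rounding whose expected yield actually realises this bound with only few edge violations. I expect the bulk of the technical effort to lie in calibrating the projection threshold $\tau$ and in handling possible concentration of the $u_v$'s in a small spherical cap; a plausible backup route, if the pure SDP rounding is too lossy, is to interleave it with a Halld\'orsson-style Ramsey peeling that alternately extracts a small clique (which must have size at most $d+1$) or a slightly larger independent set from $O(d)$-vertex sub-instances, iteratively accumulating the output. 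The remaining ingredients---SDP solution, coarse rounding, and edge clean-up---follow relatively standard templates.
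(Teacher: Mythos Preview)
The paper does not prove this theorem at all: it is quoted verbatim from~\cite{BGG15} and used as a black box in the proof of Theorem~\ref{thm:main}. So there is no ``paper's own proof'' to compare against; any argument you give goes strictly beyond what the present paper does.

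That said, your sketch of the underlying result has a genuine gap. You assert that for any graph of maximum degree $d$ one has $\vartheta(G)\geq \tilde\Omega(n\log^2 d/d)$, and that the rounding outputs an independent set of size $\tilde\Omega(n\log^2 d/d)$. Both claims are false in general: take $G$ to be a disjoint union of copies of $K_{d+1}$, where $\alpha(G)=\vartheta(G)=n/(d+1)$ and no independent set of size $\omega(n/d)$ exists. Shearer / Ajtai--Koml\'os--Szemer\'edi bounds need local sparsity (e.g.\ triangle-freeness) and give only one $\log d$ factor, not two; they do not apply to arbitrary degree-$d$ graphs. Consequently the step ``together with the trivial bound $\alpha(G)\le n$'' cannot close the argument, because the promised output size is simply unattainable on dense-neighbourhood instances.

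What the actual result in~\cite{BGG15} shows is an integrality-gap statement: one can round to an independent set of size $\tilde\Omega(\vartheta(G)\cdot \log^2 d/d)$, i.e.\ the output is compared to the SDP value, not to $n$. The proof hinges on a dichotomy on the geometry of the SDP vectors (spread versus clustered) combined with a local-sparsity argument inside neighbourhoods; the $O^*(\exp(O(d)))$ running time is not slack but is genuinely used to solve size-$O(d)$ subinstances exactly, which is what buys the second $\log d$ factor over Halperin's $O(d/\log d)$. Your outline touches the right ingredients (SDP, hyperplane rounding, Ramsey-type peeling) but the way you combine them---via an absolute lower bound on $\vartheta(G)$---does not work.
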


\begin{proof}[Proof of Theorem~\ref{thm:main}]
	We may apply Lemma~\ref{lem:red} with $r/3$ and, by virtue of Theorem~\ref{thm:bansalind}, with $d(r/3)=\tilde{\Theta}(r\log^2 r)$, and $T(n,r)=O^*(\exp(\tilde{O}(r\log^2 r)))$. We obtain an $O^*(\exp(\tilde{O}(n/r\log^2 r + r\log^2r)))$ expected time algorithm that outputs an independent set of expected size $2\alpha(G)/r$.

	Since the size of the output is upper bounded by $\alpha(G)$ we obtain an independent set of size at least $\alpha(G)/r$ with probability at least $1/(3r)$, and we may boost this probability to $3/4$ by $O(r)$ repetitions.
	
	By Markov's inequality these repetitions together run in $O^*(\exp(\tilde{O}(n/r\log^2 r + r\log^2r)))$ time with probability $3/4$. The theorem statement follows by a union bound as these $O(r)$ repetitions run in the claimed running time and simultaneously some repetition finds an independent set of size at least $\alpha(G)/r$, with probability at least $1/2$.	
\end{proof}

\subparagraph*{A deterministic algorithm:} Additionally, we also show a deterministic $r$-approximation algorithm that runs in time $\exp(\tilde{O}(n/r \log r))$.
The algorithm utilizes Feige's algorithm~\cite{Feige04} as a blackbox, and is deferred to Appendix~\ref{app:feige}.

\subsection{Graph Coloring}
\label{sec:graph-coloring}

Now we use the approximation algorithm for \MIS as a subroutine for an approximation algorithm for \COL to prove Theorem~\ref{thm:col} as follows: 

\begin{proof}[Proof of Theorem~\ref{thm:col}]
	The algorithm combines the approximation algorithm $\mathtt{IS}$ from Section~\ref{sec:independent-set} for \MIS with an exact algorithm $\mathtt{optcol}$ for \COL (see, e.g.,~\cite{BHK09}) as follows:
	\begin{figure}[H]
		\begin{framed}
			\begin{algorithmic}[1]
				\REQUIRE $\mathtt{CHR}(G=(V,E),r)$
				\STATE Let $n=|V|$, $c=0$. \label{chr:first-line}
				\WHILE{$|V| \geq n/(r \log r)$ } \label{lin:while}
				\STATE $c \gets c+1$.
				\STATE $C_c \gets \mathtt{IS}(G[V],r/\ln(r \log r))$. \label{lin:invapxis}
				\STATE $V \gets V \setminus C_c$.
				\ENDWHILE \label{chr:end-first-phase}
				\STATE Let $(C_{c+1},\ldots,C_{\ell}) \gets \mathtt{optcol}(G[V])$ be some optimum coloring of the remaining graph $G(V)$. 
				\label{lin:optcol}
				\STATE \algorithmicreturn\ $(C_1,\ldots,C_{\ell})$.
			\end{algorithmic}
		\end{framed}
		\caption{Approximation algorithm for the chromatic number.}
		\label{alg:chr}
	\end{figure}
	
	We claim that $\mathtt{CHR}(G,r)$ returns with high probability a proper coloring of $G$ using $\ell \leq (r+2)\cdot \chi(G)$ colors. To prove the theorem, we invoke $\mathtt{CHR}(G,r-2)$ which has the same asymptotic running time. First, note that in each iteration of the while loop (Line~\ref{lin:while} of Algorithm~\ref{alg:chr}), $|V|$ is decreased by a multiplicative factor of at most $1-\frac{\ln(r \log r)}{r\cdot \chi(G)}$ because $G[V]$ must have an independent set of size at least $n/\chi(G)$ and therefore $|C_c| \geq \ln(r \log r) n/(r\cdot \chi(G))$. Before the last iteration, we have $|V| \geq n / (r \ln r)$. Thus, the number $\ell$ of iterations must satisfy
	$$
	1/(r \log r) \leq \left( 1-\frac{\ln(r \log r)}{r\cdot\chi(G)} \right)^{\ell -1} \leq \exp\left(-\frac{\ln(r \log r)(\ell-1)}{r\cdot\chi(G)}\right).
	$$
	This implies that $(\ell-1) \leq r\cdot\chi(G)$. Consequently, the number of colors used in the first phase of the algorithm (Line~\ref{chr:first-line} to Line~\ref{chr:end-first-phase}) is $c \leq r\chi(G)+1$. The claimed upper bound on $\ell$ follows because the number of colors used  for $G[V]$  in the second phase (Line~\ref{lin:optcol}) is clearly upper bounded by $\chi(G)$.
	
	To upper bound the running time, note that Line~\ref{lin:invapxis} runs in time
	$$
	\exp\left(\tilde{O}\left(\frac{n\ln(r \log r)}{r \log^2 (r/\ln(r \log r) )}+ r\log^2 r \right) \right) = \exp\left(\tilde{O}\left(\frac{n}{r \log r}\right)+r\lg^2 r\right),
	$$
	and implementing $\mathtt{optcol}(G=(V,E))$ by using the $O^*(2^{|V|})$ time algorithm from~\cite{BHK09}, Line~\ref{lin:optcol} also takes $O^*(2^{n/(r \log r)})$ time and the running time follows.
\end{proof}

\subsection{Vertex Cover and Hypergraph Vertex Cover}
\label{sec:vertex-cover}

In this section, we show an application of the sparsification technique to \VC to obtain Theorem~\ref{apx:vchyp}.
Here the sparsification step is not applied explicitly. Instead, we utilize the sparsification Lemma of Impagliazzo et al.~\cite{IPZ01} as a blackbox.
Subsequently, we solve each low-degree instance by using an algorithm of Halperin~\cite{Halperin02}.
The sparsification lemma due to Impagliazzo et al.~\cite{IPZ01}, shows that an instance of the $k$-Hypergraph Vertex Cover problem can be reduced to a (sub-)exponential number of low-degree instances.\footnote{The original formulation is for the Set Cover problem and the most popular formulation is for CNF-SAT problem, but they are all equivalent by direct transformation.}

\begin{lemma}[Sparsification Lemma, \cite{IPZ01,CIP06}]\label{lem:sparsify}
	There is an algorithm that, given a hypergraph $H=(V,\eset)$ with edges of size at most $k \geq 2$, a real number $\eps > 0$, produces set systems $H_1=(V,\eset_1),\ldots,H_\ell=(V,\eset_\ell)$ with edges of size at most $k$ in $O^*(\ell)$ time such that
	\begin{enumerate}
		\item every subset $X \subseteq V$ is a vertex cover of $H$ if and only if $X$ is a vertex cover of $H_i$ for some $i$,
		\item for every $i=1,\ldots,\ell$, the degree $\Delta(H_i)$ is at most $(k/\eps)^{3k}$,
		\item $\ell$ is at most $\exp(\eps n)$.
	\end{enumerate}
\end{lemma}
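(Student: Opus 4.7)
The plan is to adapt the classical IPZ branching proof from $k$-CNF-SAT to the hypergraph vertex cover formulation, using the dictionary ``hyperedge $\leftrightarrow$ clause'' and ``vertex chosen into the cover $\leftrightarrow$ literal set to true''. Starting from the one-element collection $\{H\}$, I would repeatedly refine it by a branching rule: pick any current hypergraph $H'$ in the collection, and if there exist $i \in \{1,\dots,k-1\}$ and a vertex-set $S \subseteq V$ with $|S|=i$ such that $S$ is contained as a subset in at least $t_i$ hyperedges of $H'$, branch on the smallest such $i$; otherwise declare $H'$ a leaf. The thresholds $t_i$ will be chosen of the form $(k/\eps)^{c(k-i)}$ for a small constant $c$, so that in particular $t_1 \leq (k/\eps)^{3k}$; then any leaf $H_i$ has every singleton contained in at most $t_1$ hyperedges, giving $\Delta(H_i) \leq (k/\eps)^{3k}$, which is part~(2) of the lemma.

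The branching on a heavy $S$ produces $|S|+1 \leq k$ descendants and is designed to partition the vertex covers of $H'$. Order $S=\{v_1,\dots,v_i\}$; for a VC $X$ of $H'$ let $j^\star$ be the smallest index with $v_{j^\star}\in X$, or $\infty$ if $S \cap X = \emptyset$. For each $j \in \{1,\dots,i\}$ produce a descendant $H'_j$ by (a) adding the singleton hyperedge $\{v_j\}$ (forcing $v_j$ into every VC) and deleting all hyperedges containing $v_j$ (they are automatically covered), together with adding the hyperedges $\{v_1\},\dots,\{v_{j-1}\}$ negated\,---\,actually since we only need ``exists'' in part~(1), it is enough to force $v_j$ in without excluding the earlier $v_l$; the case $j^\star=\infty$ is then handled by a final descendant $H'_\emptyset$ in which each hyperedge $e \supseteq S$ is replaced by $e \setminus S$ (still of size $<k$). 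Because each descendant is obtained by adding singleton edges and shrinking edges, every VC of a descendant is a VC of $H'$, and the case analysis above shows conversely that every VC of $H'$ is a VC of some descendant; iterating through the tree yields part~(1).

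Bounding $\ell \leq \exp(\eps n)$ is the usual potential argument. Define $\mu(H') = \sum_{e\in\mathcal{E}(H')} w_{|e|}$ for a sequence of weights $w_k<w_{k-1}<\cdots<w_1$ tuned so that, when we branch on a heavy $S$ of size $i$, the sum of $\mu$ over the children is smaller than $\mu(H')$ by $\tilde{\Omega}(w_i\cdot t_i)$: in the ``include $v_j$'' branches we delete at least $t_i$ whole hyperedges each contributing weight $\geq w_k$, while in the $H'_\emptyset$ branch we shrink $t_i$ hyperedges from size $\geq i$ down to size $<i$, saving $t_i(w_i-w_{i-1})$. A standard telescoping argument (identical to the CNF version) then gives that the branching tree on input $H$ has at most $\exp(\eps n)$ leaves, each produced in polynomial time, so the whole algorithm runs in $O^*(\ell)$ time. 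The main obstacle is the simultaneous tuning of the weights $w_j$ and the thresholds $t_i$ so that both the degree target $(k/\eps)^{3k}$ and the leaf-count $\exp(\eps n)$ come out with the stated constants; this bookkeeping is the technical heart of IPZ/CIP06 and transfers to the vertex-cover language essentially verbatim once the branching rule above is in place.
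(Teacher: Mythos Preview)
The paper does not prove this lemma at all: it is imported verbatim from \cite{IPZ01,CIP06} and used as a black box, with only a footnote remarking that the original \textsf{CNF-SAT}/\textsf{Set Cover} formulation is equivalent to the hypergraph vertex cover formulation ``by direct transformation''. So there is no ``paper's own proof'' to compare against.

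That said, your sketch is a faithful outline of the IPZ argument transported to vertex-cover language, and your dictionary (clause $\leftrightarrow$ hyperedge, literal set true $\leftrightarrow$ vertex put in the cover) is exactly the direct transformation the paper alludes to. Your verification of part~(1) is fine; note that since the hypergraph setting is monotone you indeed cannot force a vertex \emph{out} of the cover, so the children overlap rather than partition the VCs of $H'$, but as you observe only the ``exists'' direction is needed and the reverse implication follows because adding singletons and shrinking edges can only restrict the set of vertex covers. The one place where your write-up is too loose to stand as a proof is the leaf-count bound: the sentence ``the sum of $\mu$ over the children is smaller than $\mu(H')$'' is not the right invariant for a branching-tree analysis (you have up to $k$ children, and a potential that merely decreases in total does not by itself bound the number of leaves). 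The actual IPZ/CIP analysis tracks, per branch, a drop in a carefully weighted measure and balances the branching factor against that drop via the threshold hierarchy $t_1 \gg t_2 \gg \cdots \gg t_{k-1}$; getting the constants $(k/\eps)^{3k}$ and $\exp(\eps n)$ to come out requires exactly the bookkeeping you defer to the references, so your final sentence is accurate but is doing all the real work.
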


The next tool is an approximation algorithm for the $k$-Hypergraph Vertex Cover problem when the input graph has low degree due to Halperin~\cite{Halperin02}. 

\begin{theorem}[\cite{Halperin02}]\label{thm:halperin}
	There is a polynomial time $k-(1-o(1))\frac{k(k-1)\ln \ln \Delta}{\ln \Delta}$-approximation algorithm for the vertex cover problem in hypergraphs with edges of size at most $k$ in which every element has degree at most $\Delta$, for large enough $\Delta:=\Delta(k)$.
\end{theorem}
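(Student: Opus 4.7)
The plan is to prove Halperin's theorem via a semidefinite programming (SDP) rounding algorithm, following the standard template that strengthens the trivial LP-based $k$-approximation. First, I would formulate the following SDP relaxation of $k$-hypergraph vertex cover. Introduce a unit vector $v_i \in \mathbb{R}^{n+1}$ for each vertex $i \in V$ together with an auxiliary unit vector $v_0$; the intended integral solution has $v_i = v_0$ if $i$ is in the cover and $v_i = -v_0$ otherwise, so $x_i := (1 + v_i \cdot v_0)/2$ plays the role of the indicator. Minimize $\sum_{i \in V} x_i$ subject to the covering constraint $\sum_{j=1}^k x_{i_j} \geq 1$ for each hyperedge $e = \{i_1,\dots,i_k\} \in \eset$, strengthened with $\ell_2^2$ triangle inequalities $\|v_a - v_b\|^2 + \|v_b - v_c\|^2 \geq \|v_a - v_c\|^2$. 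The SDP optimum lower-bounds the integral optimum.

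Second, I would round with a threshold rule followed by a random-projection step. Fix $\tau := c\,\ln\ln\Delta/\ln\Delta$ for a constant $c$ to be tuned. Put every $i$ with $x_i \geq 1 - \tau$ into a first-stage cover $C_1$; since each such $i$ contributes at least $1-\tau$ to the SDP objective, $|C_1| \leq \mathrm{OPT}_{\mathrm{SDP}}/(1-\tau)$. Any hyperedge not covered by $C_1$ has all $k$ of its SDP values strictly less than $1-\tau$, so the triangle inequalities force the vectors $v_{i_j}$ along such an edge to be geometrically spread. I would then draw $t := \Theta(\ln \Delta/\ln\ln\Delta)$ independent random Gaussian directions and add a remaining vertex $i$ to a second-stage cover $C_2$ iff the projection of $v_i$ onto some chosen direction exceeds a threshold $\theta$ calibrated so that the probability of \emph{missing} any fixed vertex is at most $1/(k\Delta)$. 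A union bound over the at-most-$\Delta$ hyperedges incident to any vertex then shows, with positive probability, that $C_1 \cup C_2$ covers all of $\eset$; standard derandomization (method of conditional expectations) removes the randomness.

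Third, I would bound the expected cost. The $C_1$ contribution scales as $(1-\tau)^{-1}\cdot\mathrm{OPT}$, losing a factor of roughly $1+\tau$. To control $\mathbb{E}[|C_2|]$, use the hyperedge SDP constraint: in any uncovered hyperedge at least $k-1$ of the vectors $v_{i_j}$ must satisfy $v_{i_j}\cdot v_0$ bounded away from $-1$, and the triangle inequalities propagate this to force non-trivial mutual inner products among the $k-1$ partner vectors. This is exactly what lets the Gaussian tail bound save an additional multiplicative factor of order $(k-1)\ln\ln\Delta/\ln\Delta$ in the cost of $C_2$ compared to the naive per-vertex estimate. Balancing the two loss terms and simplifying yields a cover of size at most $\left(k - (1-o(1))\tfrac{k(k-1)\ln\ln\Delta}{\ln\Delta}\right)\cdot\mathrm{OPT}$ for $\Delta$ large enough in terms of $k$, matching the theorem's claim.

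The main obstacle is obtaining the precise constant $k(k-1)$ rather than a weaker qualitative improvement over $k$. Bare LP rounding yields only an additive saving of order $1/\ln\Delta$, so it is essential that the SDP's $\ell_2^2$ triangle inequalities be used to enforce near-orthogonality among the $v_{i_j}$ inside any hyperedge not hit by $C_1$. Calibrating the three parameters $\tau$, $t$, and $\theta$ so that the Gaussian tail bounds combine with this near-orthogonality to extract exactly the factor $k(k-1)\ln\ln\Delta/\ln\Delta$ -- while still keeping the miss-probability below $1/(k\Delta)$ so the union bound over incident hyperedges succeeds -- is the technical heart of the argument and the step that pins down the theorem's exact form.
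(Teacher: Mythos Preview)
The paper does not prove this theorem at all: it is quoted verbatim from Halperin~\cite{Halperin02} and used as a black box in establishing Theorem~\ref{apx:vchyp}. There is therefore no proof in the paper to compare your attempt against.

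As for your sketch itself, it is in the right general orbit (an SDP relaxation with fractional variables $x_i=(1+v_0\cdot v_i)/2$, followed by threshold rounding) but diverges from Halperin's actual argument in ways that matter. Halperin does not add $\ell_2^2$ triangle inequalities to the SDP, and he does not round via repeated Gaussian projections in the Goemans--Williamson style you describe. Instead, his hypergraph algorithm thresholds on the scalar values $x_i$ directly: vertices with $x_i$ above a carefully chosen cutoff go into the cover, and the key observation is that any still-uncovered hyperedge has all $k$ of its $x$-values in a narrow band just below the cutoff, which (via the SDP vector structure and a Karger--Motwani--Sudan-type argument applied to the \emph{complementary} independent-set side) lets the residual instance be handled cheaply. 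Your claim that the triangle inequalities ``force near-orthogonality among the $v_{i_j}$'' inside an uncovered hyperedge is not correct as stated --- the covering constraint $\sum x_{i_j}\ge 1$ together with $x_{i_j}<1-\tau$ constrains the inner products with $v_0$, not the pairwise inner products among the $v_{i_j}$, and triangle inequalities alone do not bridge that gap. Consequently the mechanism you propose for extracting the precise $k(k-1)$ constant is not the one that actually works; without the correct structural lemma your parameter-balancing step would not yield the claimed ratio.
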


Now we complete the proof of the theorem by applying Lemma~\ref{lem:sparsify} with parameter $\eps= k/(kr)^{kr}$.
The number of low-degree instances $H_i$ produced by Lemma~\ref{lem:sparsify} is at most $\exp(\eps n) = \exp\left(O\left(\frac{k}{(kr)^{kr}}\right)\right)$.
Each graph $H_i$ has degree at most $\Delta(H_i) \leq (k/\eps)^{3k} = (kr)^{3k^2 r}$.
Note that
\[
\frac{\ln \ln \Delta(H_i)}{\ln \Delta(H_i)} \geq \frac{\ln(3k^2 r \ln (kr))}{3k^2 r \ln (kr)} \geq \frac{1}{3k^2r}.
\]

Plugging this value of $\Delta(H_i)$, Halperin's algorithm gives the approximation factor of
$$
k - \frac{k(k-1)\ln \ln \Delta}{\ln \Delta} \leq k- \frac{1}{6r}.
$$

Thus this gives an $k-1/(6r)$-approximation running in time $O^*(\exp(nk/(kr)^{kr}))$ which translates to an $k-1/r$-approximation running in time $O^*(\exp(nk/(kr/6)^{kr/6}))$.

\iffalse
Note that the algorithm from Theorem~\ref{thm:halperin} is a $(k-\tfrac{1}{r})$-approximation as long as $\Delta(H) \leq r^r$ as in this case
	\[
		(1-o(1))\frac{k(k-1)\ln \ln \Delta}{\ln \Delta} \geq \frac{\ln (r \ln \Delta)}{r \ln \Delta} \geq 1/r.
	\]
	In order to guarantee that $\Delta(H) \leq r^r$, we apply Lemma~\ref{lem:sparsify} with $\eps = k / (r^{r/3k})$, and we obtain the promised running time.
\fi

\newcommand{\fgpcp}{\mbox{\sf FGPCP}\xspace}
\newcommand{\val}{{\sf val}}
\section{PCP Parameters and Exponential-time approximation hardness}

Exponential-time approximation has connections to the trade-off questions between three parameters of PCPs: {\em size}, {\em freebit}, and {\em gap}.
To formally quantify this connection, we define new terms, formally illustrating the ideas that have been already around in the literature.
We define a class of languages \fgpcp which stands for {\em Freebit and Gap-amplifiable PCP}.
Let $g$ be a positive real, and $S, F$ be non-decreasing functions.
A language $L$ is in $\fgpcp_{c}(S, F)$ if there is a constant $g_0 >1$ such that, for all constants $g \geq g_0$, there is a verifier $V_{g}$ that, on input $x \in \{0,1\}^n$, has access to a proof $\pi: |\pi| = O(S(n,g))$ and satisfies the properties:
\begin{itemize}
\item The verifier runs in $2^{o(n)}$ time.
\item If $x \in L$, then there is a proof $\pi$ such that $V^{\pi}_g(x)$ accepts with probability $\geq c$.
\item If $x \not\in L$, then for any proof $\pi$, $V^{\pi}_g(x)$ accepts with probability $\leq c/g$.
\item For each $x$ and each random string $r$, the verifier has $\leq F(g)$ accepting configurations.
\end{itemize}

The parameters $g$, $S$ and $\log F$ are referred to as {\em gap}, {\em size} and {\em freebit} of the PCPs respectively. For convenience, we call $F(g)$ the {\em freeness} of the PCP.
An intuitive way to view this PCP is as a class of PCPs parameterized by gap $g$.
An interesting question in the PCPs and hardness of approximation literature has been to find the smallest functions $S$ and $F$.

\begin{theorem}\label{thm:pcp and sub-expo}
If $\SAT \in \fgpcp_\delta(S, F)$ for some function $S(n,g)$ that is at least linearly growing in $n$, then for any constant $r$,  $r$-approximating \MIS, in input graph $G$, cannot be done in time $2^{o(S^{-1}(|V(G)|,r)/r F(r))}$ unless ETH fails. (we think of $r$ as a fixed number, and therefore $S(n,r)$ should be seen as a function on a single variable $n$.)
\end{theorem}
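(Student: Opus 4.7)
I would apply the standard FGLSS-style reduction from \SAT to \MIS, instantiated with the freebit PCP guaranteed by the hypothesis, and then contradict ETH. Fix a constant $g := \max(g_0, 2r)$; since $r$ is assumed constant, so are $g$ and $F(r)$. Given a \SAT instance $\phi$ on $n$ variables, invoke the verifier $V_g$ from $\SAT \in \fgpcp_\delta(S, F)$: by hypothesis, $V_g$ has access to a proof of length $O(S(n, r))$, has completeness at least $\delta$, soundness at most $\delta/(2r)$, and at most $F(r)$ accepting configurations per random string. After a standard normalization that reduces the verifier's randomness complexity to $O(\log S(n, r))$, the number of distinct random strings becomes $R = \poly(S(n, r))$, which we absorb into the asymptotics.

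Next, construct the FGLSS graph $G_\phi$, whose vertex set consists of pairs (random string $\tau$, accepting local view $\alpha$), with an edge between two pairs whose local views disagree on a common proof bit. The vertex count then satisfies $|V(G_\phi)| \leq R \cdot F(r) = O(S(n, r) \cdot F(r))$ up to polynomial factors. The classical FGLSS analysis shows that independent sets in $G_\phi$ correspond to globally consistent collections of accepting configurations; consequently $\alpha(G_\phi)/R \geq \delta$ when $\phi$ is satisfiable, and $\alpha(G_\phi)/R \leq \delta/(2r)$ otherwise. The resulting ratio exceeds $r$, so any $r$-approximation for \MIS applied to $G_\phi$ determines the satisfiability of $\phi$.

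To conclude, suppose for contradiction that \MIS admits an $r$-approximation running in time $2^{o(S^{-1}(|V(G)|, r)/(r F(r)))}$. Applied to $G_\phi$ with $N := |V(G_\phi)| = O(S(n, r) F(r))$, this gives a \SAT algorithm of running time $2^{o(S^{-1}(N, r)/(r F(r)))}$. Because $r$ and $F(r)$ are constants and $S$ is at least linear in $n$, one has $S^{-1}(N, r) = \Omega(n)$, so $S^{-1}(N, r)/(r F(r)) = \Omega(n)$; this yields a \SAT algorithm running in $2^{o(n)}$ time and contradicts ETH.

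The main obstacle will be the careful bookkeeping around the PCP parameters in the FGLSS reduction. In particular, one needs a clean normalization argument to bound the randomness complexity by $O(\log S)$ (and hence $R$ by $\poly(S)$), which keeps $|V(G_\phi)|$ within $O(S(n, r) F(r))$ as advertised; the definition in the paper constrains only the verifier's total running time, so the normalization must be invoked explicitly. A secondary subtlety is the deliberate choice of $g = 2r$, strictly above $r$, so that the PCP's YES/NO gap exceeds the approximation ratio, enabling the $r$-approximation algorithm to separate the two cases.
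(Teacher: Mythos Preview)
Your approach is essentially the paper's: compose the freebit PCP with the FGLSS reduction, after first cutting the verifier's randomness down to linear in the proof length. The ``standard normalization'' you flag as the main obstacle is exactly what the paper isolates as a separate lemma, phrased as building an intermediate \CSP by randomly subsampling $O(S(n,g)\cdot g/\delta)$ of the verifier's random strings; so your outline and the paper's proof match.

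Two concrete slips to fix. First, in the final paragraph the inequality points the wrong way. To conclude that the hypothetical \MIS algorithm solves \SAT in $2^{o(n)}$ time you need an \emph{upper} bound $S^{-1}(N,r)=O(n\cdot rF(r))$, not the lower bound $S^{-1}(N,r)=\Omega(n)$ you wrote: if $S^{-1}(N,r)/(rF(r))$ were $n^2$, then $2^{o(n^2)}$ is not $2^{o(n)}$. The paper argues $N=O(S(n,g)\,gF(g))$ and hence $S^{-1}(N,g)\le O(n\,gF(g))$, using that $S$ is at least linear; that is the direction you want. Second, and relatedly, ``$R=\poly(S(n,r))$, up to polynomial factors'' is too loose for this step: if the normalization only gave $R=S(n,r)^c$ with $c>1$ and $S$ were linear, you would get $S^{-1}(N,r)=\Theta(n^c)$ and the contradiction would fail. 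The subsampling argument actually yields $R=O(S(n,r))$ linearly (this is the content of the paper's lemma), and you should state it that way. A minor point: with $g=2r$ the PCP parameters are $S(n,2r)$ and $F(2r)$, not $S(n,r)$ and $F(r)$; since $r$ is fixed this is harmless, but it should be said.
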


We prove the theorem later in this section.

\begin{corollary}
\label{cor:tradeoff}
Assuming that \SAT has no $2^{o(n)}$-time randomized algorithm and that $\SAT \in \fgpcp_{\delta}(S,F)$, then it must be the case that $S(n,g) \cdot F(g) = \Omega(n \cdot \frac{\log^2 g}{{\sf poly}( \log \log g)})$.
\end{corollary}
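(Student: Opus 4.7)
The plan is to derive the corollary by juxtaposing the conditional lower bound of Theorem~\ref{thm:pcp and sub-expo} against the unconditional upper bound of Theorem~\ref{thm:main}. Concretely, I would argue by contrapositive: if $S(n,g)\cdot F(g)$ grows too slowly in $n$, then the lower bound forced by Theorem~\ref{thm:pcp and sub-expo} on $r$-approximating \MIS would exceed the running time guaranteed by Theorem~\ref{thm:main}, contradicting ETH for \SAT.

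First I would fix $r := g$ as a (large enough) constant, and instantiate Theorem~\ref{thm:main} to obtain a randomized algorithm that $g$-approximates \MIS on an $n$-vertex input in time
\[
2^{\tilde{O}(n/(g\log^{2} g) + g \log^{2} g)}.
\]
Because $g$ is constant, the additive term $g\log^{2}g$ is absorbed into the $\tilde{O}(\cdot)$ notation for $n$ large enough, so the running time simplifies to $2^{\tilde{O}(n/(g\log^{2} g))}$, where $\tilde{O}$ hides only $\mathrm{poly}(\log\log g)$ factors (per the Preliminaries).

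Next I would apply Theorem~\ref{thm:pcp and sub-expo} with the same constant $r = g$: under the hypothesis $\SAT \in \fgpcp_{\delta}(S,F)$ together with ETH, any $g$-approximation algorithm for \MIS on $n$-vertex graphs must run in time at least $2^{\Omega(S^{-1}(n,g)/(g F(g)))}$. Matching this against the upper bound from the previous paragraph forces
\[
\frac{S^{-1}(n,g)}{g\,F(g)} \;\leq\; \tilde{O}\!\left(\frac{n}{g\log^{2} g}\right),
\]
and rearranging gives $S^{-1}(n,g) \leq O\!\bigl(n\,F(g)\,\mathrm{poly}(\log\log g)/\log^{2} g\bigr)$. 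Setting $m := S^{-1}(n,g)$ so that $n = S(m,g)$, this is exactly $S(m,g)\cdot F(g) = \Omega\!\bigl(m\log^{2} g / \mathrm{poly}(\log\log g)\bigr)$, which is the claimed bound after renaming $m$ back to $n$.

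The main obstacle is purely bookkeeping: I need to be careful that the $\tilde{O}$ notation used in Theorem~\ref{thm:main} (hiding $\mathrm{poly}(\log\log r)$) and the $o(\cdot)$ in Theorem~\ref{thm:pcp and sub-expo} compose in the direction that yields the claimed $\mathrm{poly}(\log\log g)$ slack in the denominator. Since $g$ is treated as a constant, the additive $g\log^{2} g$ term in the exponent of Theorem~\ref{thm:main} plays no role, and no further technical ingredients are needed beyond the inversion of $S$.
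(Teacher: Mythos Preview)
Your proposal is correct and follows essentially the same approach as the paper: both combine the algorithmic upper bound of Theorem~\ref{thm:main} with the conditional lower bound of Theorem~\ref{thm:pcp and sub-expo} to force the stated inequality on $S\cdot F$. The paper phrases it as a contrapositive (assume $S(n,g)F(g)$ is too small, infer $S^{-1}(|V(G)|,r)=o(|V(G)|F(r)\,\mathrm{poly}(\log\log r)/\log^2 r)$, and contradict the existence of the algorithm), while you argue the consistency directly and then invert $S$; these are the same argument.
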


\begin{proof}
Otherwise, $S^{-1}(|V(G)|, r) = o(|V(G)| \cdot \frac{F(r) {\sf poly}( \log \log r)}{\log^2 r})$, and the Theorem~\ref{thm:pcp and sub-expo} would imply that there is no $2^{o(|V(G)| \cdot \frac{{\sf poly}( \log \log r)}{r \log^2 r})}$, contradicting the existence of our \MIS approximation algorithm.
\end{proof}

Now let us phrase the known PCPs in our framework of \fgpcp.  Chan's PCPs \cite{Chan16} can be stated that $\SAT \in \fgpcp_{1-o(1)}(\poly, O(\log g))$. Applying our results, this means that if one wants to keep the same freebit parameters given by Chan's PCPs, then the size must be at least $\Omega(n \log g)$.
Another interesting consequence is a connection between \VC and Freebit PCPs in the polynomial time setting~\cite{BellareGS98}.

\begin{theorem}[\cite{BellareGS98}]
\label{thm:vc-hardness}
\VC is $(2-\epsilon)$ hard to approximate if and only if $\SAT \in {\sf FGPCP}_{1/2 -\epsilon}({\sf poly}, 1)$.
\end{theorem}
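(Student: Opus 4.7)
My plan is to derive both directions of the equivalence from the FGLSS graph construction (Feige--Goldwasser--Lov\'asz--Safra--Szegedy), with parameters tuned for the freeness-$1$ regime. The forward direction is a direct quantitative translation, while the converse direction is a more delicate reverse engineering.

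$(\Leftarrow)$ Suppose $\SAT\in\fgpcp_{1/2-\varepsilon}(\poly,1)$, and fix any arbitrarily large constant $g$ together with its polynomial-time verifier $V_g$. Given a \SAT instance $x$, I would construct the FGLSS graph $H_x$ as follows: place a vertex $u_r$ for each random string $r$ on which $V_g$ has an accepting configuration (by freeness $1$, at most one, call it $a_r$, with query positions $Q_r$), and connect $u_r$ to $u_{r'}$ whenever $a_r$ and $a_{r'}$ disagree on some common position in $Q_r\cap Q_{r'}$. The standard FGLSS analysis then gives $\alpha(H_x)=p^\ast(x)\cdot|R|$, where $p^\ast(x)$ is the maximum acceptance probability, and $|V(H_x)|\leq|R|$. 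Complementing: in the YES case $\mathrm{VC}(H_x)\leq(1-(1/2-\varepsilon))|R|=(1/2+\varepsilon)|R|$, while in the NO case $\mathrm{VC}(H_x)\geq(1-(1/2-\varepsilon)/g)|R|$. As $g\to\infty$ the ratio approaches $\tfrac{1}{1/2+\varepsilon}=2-O(\varepsilon)$; since $|V(H_x)|=\poly(|x|)$, any polynomial-time $(2-O(\varepsilon))$-approximation for \VC would decide \SAT, yielding the desired NP-hardness (after rescaling $\varepsilon$).

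$(\Rightarrow)$ For the converse, assume \VC is NP-hard to $(2-\varepsilon)$-approximate via a polynomial-time reduction from \SAT producing graphs $G_x$ with cover gap between $k$ and $(2-\varepsilon)k$. I would build a freebit PCP verifier by essentially inverting the FGLSS construction: take the proof to be the characteristic vector of a purported independent set on $V(G_x)$, use random strings that address a vertex of $G_x$ together with a local consistency test dictated by its neighborhood, and design each test so that exactly one bit pattern on its queries is accepting (ensuring freeness $1$). Completeness $1/2-\varepsilon$ arises because the maximum independent set in $G_x$ has relative size $1-\mathrm{VC}(G_x)/|V(G_x)|$, which is $1/2+\Omega(\varepsilon)$ in YES instances, while the gap $g$ is amplified to any constant via a freeness-preserving composition (for instance random walks along an expander defined on the space of random strings, where at each step the unique extension consistent with the previous accepting configuration is forced).

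The main obstacle is the converse direction, and in particular maintaining freeness exactly $1$ through both the base verifier and the gap-amplification step. The naive verifier derived from a \VC instance -- query two endpoints of a random edge and accept iff at least one lies in the cover -- has three accepting configurations, not one, so the test must be designed so that its unique accepting pattern is dictated by the graph's local structure rather than by a disjunction. Likewise, standard amplification can increase freeness, so a structure-preserving amplification is required. Carrying out this freeness-$1$-preserving construction is the technical heart of the Bellare--Goldreich--Sudan argument, and executing their framework carefully would complete the proof.
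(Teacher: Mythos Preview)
The paper does not prove this theorem; it is stated as a known result of Bellare--Goldreich--Sudan and invoked only to contextualize the subsequent size lower bound. There is therefore no in-paper proof to compare your attempt against.

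On the substance of your sketch: the $(\Leftarrow)$ direction is the standard FGLSS reduction specialized to freeness~$1$, and your gap computation is correct --- the key point being that with freeness~$1$ the FGLSS graph has at most $|R|$ vertices, so complementing an independent-set gap of $(1/2-\varepsilon)$ versus $(1/2-\varepsilon)/g$ yields a \VC ratio tending to $2-O(\varepsilon)$ as $g\to\infty$. The $(\Rightarrow)$ direction you yourself flag as the hard part and do not actually carry out. Your diagnosis of the obstruction is right (the naive edge-test verifier has three accepting configurations, and generic amplification inflates freeness), but your proposed remedy --- an expander walk that ``forces the unique extension consistent with the previous accepting configuration'' --- is not a well-defined freeness-$1$ test as stated, and it is not clear how to make it one while preserving the gap. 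So what you have is a correct proof of the easy direction together with an honest placeholder for the hard one; to complete the equivalence you would need to actually execute the freeness-preserving construction from the cited work, which your proposal does not do.
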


The intended PCPs in Theorem~\ref{thm:vc-hardness} have arbitrary small soundness while the freeness remains $1$.
Our Corollary~\ref{cor:tradeoff} implies that such a PCP must have size at least $\Omega(n \log^2 g)$.

\iffalse
\begin{enumerate}
    \item {\bf Hardness under Gap-ETH:} Very recently, Dinur~\cite{Dinur16} and  Manurangsi and Raghavendra~\cite{ManurangsiR16} independently made a conjecture that \SAT does not admit an approximation scheme that runs in $2^{o(n)}$ time.

\begin{corollary}
Assuming Gap-ETH, any $r$-approximation algorithm for \MIS must run in time $2^{n/poly(r)}$.
\end{corollary}

\item

{\bf Trade-off results:} Theorem~\ref{thm:pcp and sub-expo}, together with our algorithmic result, gives a trade-off between the size and freeness of the PCP.

\begin{corollary}
Assuming that \SAT has no $2^{o(n)}$-time randomized algorithm and that $SAT \in \fgpcp_{\delta}(S,F)$, then it must be the case that $S(n,g) \cdot F(g) = \tilde \Omega(n \log^2 g)$.
\end{corollary}

Chan shows that $\SAT \in \fgpcp_{1-o(1)}({\sf poly}, O(\log g))$.
Our results imply that, if we want to keep the same freebit parameters given by Chan's PCP, the size must be at least $\Omega(n \log g)$.
\end{enumerate}

\fi

\subsection{Proof of Theorem~\ref{thm:pcp and sub-expo}}

\subparagraph*{Step 1: Creating a hard CSP}
We will need the following lemma that creates a ``hard'' \CSP from  \fgpcp.
This \CSP will be used later to construct a hard instance of \MIS.

\begin{lemma}
\label{lem: pcp to csp}
If $\SAT \in \fgpcp_{\delta}(S,F)$, then, for any $g>1$, there is a randomized reduction from an $n$-variable \SAT $\phi$ to a \CSP $\phi'$ having the following properties (w.h.p.):
\begin{itemize}
\item The number of variables of $\phi'$ is $\leq S(n)$.

\item The number of clauses of $\phi'$ is $\leq 10 S(n) g/\delta$.

\item The freeness of $\phi'$ is $\leq F(g)$.

\item If $\phi$ is satisfiable, then $\val(\phi') \geq \delta/2$.
%      Otherwise, if $\phi$ is not satisfiable, then $\val(\phi') \leq 6\delta/g$.
      Otherwise, $\val(\phi') \leq 6\delta/g$.
\end{itemize}
\end{lemma}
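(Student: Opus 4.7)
The plan is to directly encode the action of the PCP verifier $V_g$ (whose existence is guaranteed by $\SAT \in \fgpcp_\delta(S,F)$) as a CSP instance on the bits of the proof, and then subsample its random strings to control the number of clauses. Concretely, given the input $\phi$, I introduce one Boolean variable for every bit of the (padded) proof $\pi$ that $V_g$ would access, so $\phi'$ has at most $O(S(n,g))$ variables; absorbing the constant into $S$ (or re-tuning it) yields the $\leq S(n)$ bound stated in the lemma. For each random string $r$ that $V_g$ might use, consider the local predicate $C_r$ on the (constantly many, in the PCP sense) queried proof bits whose satisfying assignments are precisely the accepting configurations of $V_g$ on randomness $r$. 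By the freebit property of $\fgpcp$, each such $C_r$ has at most $F(g)$ satisfying configurations, so every clause we put into $\phi'$ will have freeness $\leq F(g)$ as required.

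The next step is to sample, independently and uniformly, $m := 10\, S(n)\, g/\delta$ random strings $r_1,\dots,r_m$, and let $\phi'$ consist of the clauses $C_{r_1},\dots,C_{r_m}$. This immediately gives the clause bound. For any fixed proof $\pi \in \{0,1\}^{S(n)}$, the indicators $\mathbf{1}[C_{r_i}\text{ accepts under }\pi]$ are i.i.d.\ Bernoulli variables whose mean equals the verifier's acceptance probability $p(\pi) := \Pr_r[V_g^\pi(x)\text{ accepts}]$. Then $\val(\phi')$ under the assignment induced by $\pi$ is exactly the empirical mean of these $m$ samples, so standard Chernoff bounds transfer the PCP gap to a CSP gap.

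For completeness, if $\phi$ is satisfiable, fix a proof $\pi^\star$ with $p(\pi^\star) \geq \delta$; a (one-sided) multiplicative Chernoff bound with deviation factor $1/2$ gives that the empirical fraction is $\geq \delta/2$ except with probability $\exp(-\Omega(\delta m)) = \exp(-\Omega(S(n) g))$, so w.h.p.\ $\val(\phi') \geq \delta/2$. For soundness, every proof $\pi$ satisfies $p(\pi) \leq \delta/g$; the multiplicative Chernoff bound with deviation factor $5$ yields
\[
\Pr\!\left[\tfrac{1}{m}\sum_i \mathbf{1}[C_{r_i}\text{ accepts under }\pi] > \tfrac{6\delta}{g}\right] \;\leq\; \exp\!\left(-\Omega\!\left(\tfrac{\delta m}{g}\right)\right) \;=\; \exp(-\Omega(S(n))).
\]
A union bound over all $2^{S(n)}$ candidate proofs still leaves probability $\exp(-\Omega(S(n)))$ after choosing the constants in $m$ large enough, so w.h.p.\ no proof achieves value above $6\delta/g$, i.e.\ $\val(\phi') \leq 6\delta/g$.

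The only real obstacle is the soundness union bound: the Chernoff tail must beat the $2^{S(n)}$ proofs in the union, which forces the sample count to be of the form $\Theta(S(n) g/\delta)$; the constant $10$ in the lemma's statement of the clause count is precisely the slack I would use to make the multiplicative Chernoff exponent dominate $S(n)\log 2$ comfortably. Everything else (encoding accepting patterns as a clause of freeness $F(g)$, running the verifier in $2^{o(n)}$ time to build each clause, and tracking that $S$ only needs to be at least linear in $n$ to hide the poly factors inside $S$) is bookkeeping built directly into the definition of $\fgpcp$.
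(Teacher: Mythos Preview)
Your proposal is correct and essentially identical to the paper's proof: the paper also introduces one Boolean variable per proof bit, samples $M = 10\lceil S(n)g/\delta\rceil$ random strings to form the clauses, and uses a single Chernoff bound for completeness plus a Chernoff bound with a union bound over all $2^{S(n)}$ proofs for soundness. Your write-up is in fact slightly more careful about the deviation factors and the role of the constant $10$ than the paper's own argument.
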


\begin{proof}
Let $g$ be any number and $V_g$ be the corresponding verifier. 
On   input $\phi$, we create a \CSP $\phi'$ as follows.
For each proof bit $\Pi_i$, we have  variable $x_i$.
The set of variables is $X = \{x_1,\ldots, x_{S(n)}\}$.
We perform $M = 10 \lceil S(n)g/\delta \rceil$ iterations.
In iteration $j$, the verifier picks a random string $r_j$ and create a predicate $P_j(x_{b_1},\ldots, x_{b_q})$, where $b_1,\ldots, b_q$ are the proof bits read by the verifier $V_g^{\Pi}$ on random string $r_j$.  
This predicate is true on assignment $\gamma$ if and only if the verifier accepts the local assignment where $\Pi_{b_i} = \gamma(x_i)$ for all $i \in [q]$.  

First, assume that $\phi$ is satisfiable.
Then there is a proof $\Pi^*$ such that the verifier $V^{\Pi^*}(\phi)$ accepts with probability $\delta$.
Let $\gamma: X \rightarrow \{0,1\}$ be an assignment that agrees with the proof $\Pi^*$.
So $\gamma$ satisfies each predicate $P_j$ with probability $\delta$, and therefore, the expected number of satisfied predicates is $\delta M$.
By Chernoff's bound, the probability that $\gamma$ satisfies less than $\frac{\delta M}{2}$ predicates is at most $2^{-\delta M/ 8} \leq 2^{-n}$.

Next, assume that $\phi$ is not satisfiable.
For each assignment $\gamma: X \rightarrow \{0,1\}$, the fraction of random strings satisfied by the corresponding proof $\Pi_{\gamma}$ is at most $\delta/g$.
When we pick a random string $r_j$, the probability that $V^{\Pi_{\gamma}}(\phi, r_j)$ accepts is then at most $\delta/g$.
So, over all the choices of $M$ strings, the expected number of satisfied predicates is $\delta M/g \geq 10 S(n)$.
By Chernoff's bound, the probability that $\gamma$ satisfies more than $\delta M/g$ predicates is at most $2^{-10 S(n)}$.
By union bound over all possible proofs of length $S(n)$ (there are $2^{S(n)}$ such proofs), the probability that there is such a $\gamma$ is at most $2^{S(n)} 2^{-10 S(n)} \leq 2^{- S(n)}$.
\end{proof}

\subparagraph*{Step 2: FGLSS reduction}
The FGLSS reduction is a standard reduction from \CSP to \MIS introduced by Feige~et~al.~\cite{FGLSS96}.
The reduction simply lists all possible configurations (partial assignment) for each clause as vertices and adding edges if there is a conflict between two configuration.
In more detail, for each predicate $P_i$ and each partial assignment $\gamma$ such that $P_i(\gamma)$ is true, we have a vertex $v(i,\gamma)$. For each pair of vertices $v(i,\gamma) v(i', \gamma')$ such that there is a variable appearing in both $P_i$ and $P_{i'}$ for which $\gamma(x_j) \neq \gamma'(x_{j})$, we have an edge between $v(i,\gamma)$ and $v(i', \gamma')$.
\iffalse 
Specifically, given a \CSP instance $\phi=(\{x_j\},\{P_i\}$, the FGLSS reduction construct a graph $G=(V,E)$ such that
\begin{align*}
V &= \{v(i,\sigma): P_i(\{\sigma(x_j)\}_j)=1 \forall i \forall \sigma:\{0,1\}^{S_i}\rightarrow\{0,1\}\}\\
E &= \{v(i,\sigma)v(i',\sigma'): \forall i \forall i' \exists x_j\in S_i\cap S_{i'}\,\sigma(x_j) \neq \sigma'(x_j)\}
\end{align*}
\fi 
\begin{lemma}[FGLSS Reduction \cite{FGLSS96}]
There is an algorithm that, given an input \CSP $\phi$ with $m$ clauses, $n$ variables, and freeness $F$, produces a graph $G=(V,E)$ such that (i) $|V(G)| \leq m F$ and (ii) $\alpha(G) = \val(\phi) m$, where $\val(\phi)$ denotes the maximum number of predicates of $\phi$ that can be satisfied by an assignment. 
\end{lemma}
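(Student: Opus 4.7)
The plan is to verify both parts of the conclusion directly from the FGLSS construction described just before the lemma, essentially by unpacking the definitions of freeness and of the edge rule.

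For the vertex bound (i), I would argue one clause at a time. For each predicate $P_i$ the vertices of the form $v(i,\gamma)$ are in one-to-one correspondence with the satisfying partial assignments $\gamma$ to the variables appearing in $P_i$. By the definition of freeness, the number of such accepting local configurations per clause is at most $F$, so each clause contributes at most $F$ vertices and summing over the $m$ clauses gives $|V(G)| \le mF$.

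For the independence number (ii), I would prove the two inequalities separately. Write $k = \val(\phi)\cdot m$ for the target. For $\alpha(G)\ge k$, start from an assignment $\sigma^\star\colon X\to\{0,1\}$ realizing this value and form the set $S = \{\, v(i,\sigma^\star|_{P_i}) : P_i \text{ is satisfied by } \sigma^\star\,\}$. By construction $|S|=k$; moreover any two of its vertices arise as restrictions of the \emph{same} global assignment, so they agree on every shared variable and are therefore non-adjacent by the FGLSS edge rule, meaning $S$ is an independent set.

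For the upper bound $\alpha(G)\le k$, I would take any maximum independent set $I$ and reconstruct an assignment from it. Two observations drive the argument. First, $I$ contains at most one vertex per clause, because two distinct vertices $v(i,\gamma)$ and $v(i,\gamma')$ disagree on some variable in the scope of $P_i$ and are thus adjacent. Second, by the edge rule the partial assignments attached to different vertices of $I$ are pairwise consistent on shared variables, so they glue into a single partial assignment on $\bigcup_i \mathrm{scope}(P_i)$; I would then extend this arbitrarily on the remaining variables of $X$. The resulting global assignment satisfies at least the $|I|$ clauses represented in $I$, giving $\val(\phi)\cdot m \ge |I| = \alpha(G)$.

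The main obstacle is really only a definitional alignment: one has to be careful that the CSP notion of freeness used in (i) coincides with the bound on the number of accepting local configurations per clause inherited from the underlying $\fgpcp$ verifier (which in the intended application is where this CSP comes from via Lemma~\ref{lem: pcp to csp}), so that both the counting step and the graph construction are consistent. Once this is pinned down, both inequalities reduce to the standard ``assignment versus independent set'' dictionary of FGLSS.
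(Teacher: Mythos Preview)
Your proposal is correct and is exactly the standard FGLSS argument. The paper itself does not supply a proof of this lemma: it merely describes the construction (the vertices $v(i,\gamma)$ for accepting local assignments and the conflict edges) and cites \cite{FGLSS96} for the stated properties. What you have written is precisely the expected verification of (i) and (ii) from that construction, so there is nothing to compare against beyond noting that you have filled in details the paper leaves implicit.
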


\subparagraph*{Combining everything}
Assume that $\SAT \in \fgpcp_{\delta}(S, F)$.
Let $g >0$ be a constant and $V_g$ be the verifier of \SAT that gives the gap of $g$.
By invoking~Lemma~\ref{lem: pcp to csp}, we have a \CSP $\phi_1$ with $S(n,g)$ variables and $100 S(n,g) g/\delta$ clauses.
Moreover, the freeness and gap of $\phi_1$ are $F(g)$ and $g$ respectively.
Applying the FGLSS reduction, we have a graph $G$ with $N=|V(G)| = 100 S(n,g) F(g) g/\delta= O(S(n,g) F(g) g)$.
Now assume that we have an algorithm ${\mathcal A}$ that gives a $g$ approximation in time $2^{\frac{o(S^{-1}(N,g))}{g F(g)}}$.
Notice that $S^{-1}(N,g) \leq O(n g F(g))$ and therefore algorithm ${\mathcal A}$  distinguishes between Yes- and No-instance in time $2^{o(n)}$, a contradiction.

\vspace{-0.1in}

\subparagraph*{Hardness under Gap-ETH:} Dinur~\cite{Dinur16} and  Manurangsi and Raghavendra~\cite{ManurangsiR16}  made a conjecture that \SAT does not admit an approximation scheme that runs in $2^{o(n)}$ time.
We observe a Gap-ETH  hardness of $r$-approximating \MIS in time $2^{n/r^c}$ for some constant $c$.
The proof uses a standard amplification technique and is deferred to Appendix~\ref{app:gapeth}.

\section{Further Research}
Our work leaves ample opportunity for exciting research. An obvious open question is to derandomize our branching, e.g. whether Theorem~\ref{thm:main} can be proved without randomized algorithms. While the probabilistic approximation guarantee can be easily derandomized by using a random partition of the vertex set in $r$ parts or splitters, it seems harder to strengthen the expected running time bound to a worst-case running time bound.

Can we improve the running times of the other algorithms mentioned in the introduction that use the partition argument, possibly using the randomized branching strategy? Specifically, can we $(1+\eps)$-approximate \MIS on planar graphs in time $O^*(2^{(1/\eps)/\log(1/\eps)})$, or $r$-approximate \MIS in time $O^*(2^{tw/r \log r})$? As mentioned in the introduction, a result of Marx~\cite{Marx07} still leaves room for such lower order improvements. %\nikhil{Check this line about lower order improvements?}. 
Another open question in this category is how fast we can $r$-approximate $k$-\MIS, where the goal is to find an independent st of size of $k$. For example no $O(n^{k/f(r)})$ time algorithm is known, where $f(r)$ is a non-trivial function of $r$, that distinguishes graphs $G$ with $\alpha(G) \geq 2k$ from graphs with $\alpha(G) \leq k$. The partition argument gives only a running time of $(n/r)^{0.792k}$, and no strong lower bounds are known for this problem.
Finally, a big open question in the area is to find or exclude a $(2-\eps)$-approximation for \VC in graphs in subexponential time for some fixed constant $\eps >0$. 

\vspace{-0.1in} 

\subparagraph*{Acknowledgment} 
NB is supported by a NWO Vidi grant 639.022.211 and ERC consolidator grant 617951. 
BL is supported by ISF Grant No. 621/12 and I-CORE Grant No. 4/11.
DN is supported by the European Research Council (ERC) under the European Union’s Horizon 2020 research and innovation programme under grant agreement No 715672 and the Swedish Research Council (Reg. No. 2015-04659).
JN is supported by NWO Veni grant 639.021.438.

\bibliography{refs}

\clearpage
\appendix

\section{A Deterministic Algorithm for \MIS}\label{app:feige}

In this section, we give a deterministic $r$-approximation algorithm that runs in time $2^{O(n/ r \log r)}$.
This algorithm is a simple consequence of Feige's algorithm~\cite{Feige04}, that we restate below in a slightly different form.

\begin{theorem}[\cite{Feige04}]
Let $G$ be a graph with independence ratio $\frac{\alpha(G)}{|V(G)|} = 1/k$.
Then, for any parameter $t$, one can find an independent set of size $\Omega(t \cdot \log_k (\frac{n}{kt}))$ in time $\poly(n) k^{O(t)}$.
\end{theorem}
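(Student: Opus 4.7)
The structure of the bound—independent set size $\Omega(t \cdot \log_k(n/kt))$ in time $\poly(n)\cdot k^{O(t)}$—strongly suggests a branching procedure of depth $t$ and branching factor $O(k)$, where each level of the branching contributes an additional $\log_k(n/kt)$ vertices to the independent set. My plan is to first establish a polynomial-time base case and then amplify it via branching.

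For the base case, I would show: if $H$ is any graph on $N$ vertices with independence ratio at least $1/k$, a Ramsey-style greedy procedure finds an independent set of size $\Omega(\log_k N)$ in polynomial time. Iteratively pick a vertex $v$ of minimum degree; since $\alpha(H)/N \geq 1/k$, a Tur\'an-type averaging argument gives $d(v) \leq N(1 - 1/k)$, so $|V(H)\setminus N[v]| \geq N/k$, and the induced subgraph still has independence ratio at least $1/k$ (since removing $N[v]$ destroys at most one vertex of a fixed maximum independent set). Iterating shrinks the vertex set by a factor of $k$ per committed vertex, yielding $\log_k N$ independent-set vertices after at most $\log_k N$ rounds.

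To amplify the additive $\log_k N$ into the multiplicative $t \cdot \log_k(n/kt)$, I would organize the algorithm as a branching tree of depth $t$. At each internal node we first run the base-case greedy in the current residual graph to collect a batch of $\log_k(n/kt)$ independent-set vertices, and then, to set up the next level, we branch over a set of $O(k)$ carefully chosen ``restart'' vertices guaranteed to contain at least one vertex of the hidden maximum independent set; we commit this vertex to our output and recurse in its non-neighborhood. This produces $k^{O(t)}$ leaves, and along a correctly-guessed root-to-leaf path we accumulate $t$ batches of $\log_k(n/kt)$ vertices, reaching the claimed bound at some leaf.

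The main obstacle I foresee is ensuring that, across all $t$ levels, the residual graph along some branch retains independence ratio $\geq 1/k$ \emph{and} remains large enough (say $\geq n/k^{t+1}$) so that the base case still yields a full $\log_k(n/kt)$-size batch. Deterministically guaranteeing that $O(k)$ candidates always cover a max-IS vertex—without enumerating $\binom{n}{t}$ choices—is the technical crux; a randomized version is straightforward (a uniform vertex lies in the max IS with probability $1/k$, so $O(k)$ samples suffice), but the deterministic version of Feige's algorithm appears to require either a splitter-based derandomization or a structural selection rule (for instance, restricting the candidate pool to the minimum-degree vertices whose neighborhoods provably cannot all swallow the max IS). I expect the rest of the analysis—bounding the leaf count by $k^{O(t)}$ and summing the per-level contributions—to be a routine inductive calculation once this selection step is in place.
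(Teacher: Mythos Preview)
First, note that the paper does not prove this theorem at all---it is quoted from \cite{Feige04} and used as a black box in Appendix~\ref{app:feige}, so there is no in-paper proof to compare against. On its own merits, your base case contains a concrete error. You assert that after deleting $N[v]$ for a minimum-degree vertex $v$, ``the induced subgraph still has independence ratio at least $1/k$ (since removing $N[v]$ destroys at most one vertex of a fixed maximum independent set).'' That parenthetical holds only when $v$ lies in the fixed maximum independent set; a minimum-degree vertex need not lie in \emph{any} maximum independent set. For instance, let $v$ be joined to an independent triple $\{a_1,a_2,a_3\}$, and join each $a_i$ to every vertex of a large clique $K$. Then $v$ is the unique minimum-degree vertex, the unique maximum independent set is $\{a_1,a_2,a_3\}$, and deleting $N[v]$ leaves only $K$: the independence number drops from $3$ to $1$ and the independence ratio strictly worsens. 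So your greedy loop does not maintain its invariant, and the $\Omega(\log_k N)$ base-case guarantee does not follow from the argument you give.

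Feige's actual argument avoids this pitfall by a different mechanism: a Ramsey-type procedure that either directly returns an independent set of size $\Omega(\log_k n)$ or exhibits a small subgraph $H$ with $\alpha(H)$ provably below $|V(H)|/k$; deleting such an $H$ costs almost nothing against the optimum while improving the independence ratio of the remainder. The $k^{O(t)}$ running time and the $t\cdot\log_k(n/kt)$ output size arise from iterating this removal/certification step, not from guessing $O(k)$ candidate independent-set vertices per level as in your outline. The ``technical crux'' you correctly flag---deterministically hitting the hidden maximum independent set with only $O(k)$ candidates---is exactly what the subgraph-removal idea circumvents.
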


Now, our algorithm proceeds as follows.

\begin{itemize}
\item If $\alpha(G) < n/\log^2 r$, we can enumerate all independent sets of size $n/(r \log^2 r)$ (this is an $r$-approximation) in time ${n \choose n/(r \log^2 r)} \leq (e r \log^2 r)^{\frac{n}{r \log^2 r}} \leq 2^{O(n/(r \log r))}$.

\item Otherwise, the independence ratio is at least $1/k$ where $k = \log^2 r$.
We choose $t = n/(r \log r)$, so Feige's algorithm finds an independent set of size at least
$$
\Omega\left(t \cdot \log_k (\frac{n}{kt})\right)
= \Omega\left(\frac{n}{ r\log r} \cdot \log_k (r \log r)\right)
= \Omega(n/(r \log \log r))
$$
The running time is
$$k^{O(t)} = 2^{O(\frac{n(\log \log r)}{r \log r})}$$
If we redefine $r'=  r \log \log r$, then the algorithm is an $r'$-approximation algorithm that runs in time $2^{O(n (\log \log r')^2/r' \log r')}$.
\end{itemize}

\section{Gap-ETH hardness of \MIS (sketch)} \label{app:gapeth}
We now sketch the proof. 
We are given an $n$-variable \ThreeSAT formula $\phi$ with perfect completeness and soundness $1-\epsilon$ for some $\epsilon >0$.
We first perform standard amplification and sparsification to get $\phi'$ with gap parameter $g$, the number of clauses is $n g$, and freeness is $g^{O(1/\epsilon)}$.
Now, we perform FGLSS reduction to get a graph $G$ such that $|V(G)| = n g^{O(1/\epsilon)}$.
Therefore, $g$-approximation in time $2^{o(|V(G)|/g^{O(1/\epsilon)})}$ would lead to an algorithm that satisfies more than $(1-\epsilon)$ fraction of clauses in \ThreeSAT formula in time $2^{o(n)}$.
In other words, any $2^{n/r^c}$-time algorithm that $r$-approximates \MIS can be turned into a $(1+O(1/c))$-approximation algorithm for approximating \ThreeSAT in sub-exponential time. 

\end{document}